\documentclass[12pt]{elsarticle}
\leftmargin=2cm
\topmargin=-0.6in \textwidth=6in \textheight=9.4in
\oddsidemargin=0in \evensidemargin=0in
\usepackage[left=25mm, top=20mm, right=20mm, bottom=20mm]{geometry}
\journal{European Journal of Control}
\usepackage{newtxtext}
\usepackage{color}
\definecolor{darkgreen}{rgb}{0,0.4,0}
\usepackage[latin1]{inputenc}
\usepackage{amsmath}
\usepackage{graphicx}
\usepackage{amssymb}
\usepackage{verbatim}
\usepackage{epsfig}
\usepackage[labelfont=bf]{caption}
\usepackage{enumerate}
\usepackage{subfigure}
\usepackage{epstopdf}
\usepackage{mathrsfs}
\usepackage{float}
\usepackage{lineno,hyperref}
\usepackage{lipsum}
\usepackage{stfloats}
\usepackage{titletoc}

\newtheorem{assumption}{Assumption}
\newtheorem{lemma}{Lemma}
\newtheorem{remark}{Remark}

\newtheorem{proof}{Proof}
\newtheorem{proposition}{Proposition}
\newtheorem{fact}{Fact}

\def\begequarr{\begin{eqnarray}}
\def\endequarr{\end{eqnarray}}
\def\begequarrs{\begin{eqnarray*}}
\def\endequarrs{\end{eqnarray*}}
\def\begarr{\begin{array}}
\def\endarr{\end{array}}
\def\begequ{\begin{equation}}
\def\endequ{\end{equation}}
\def\lab{\label}
\def\begdes{\begin{description}}
\def\enddes{\end{description}}
\def\begenu{\begin{enumerate}}
\def\begite{\begin{itemize}}
\def\endite{\end{itemize}}
\def\endenu{\end{enumerate}}

\def\lef[{\left[\begin{array}}
\def\rig]{\end{array}\right]}
\def\qed{\hfill$\Box \Box \Box$}
\def\begcen{\begin{center}}
\def\endcen{\end{center}}
\def\begrem{\begin{remark}\rm}
\def\endrem{\end{remark}}

\def\beeq#1{\begin{equation}{#1}\end{equation}}
\def\begmat#1{\begin{bmatrix}#1\end{bmatrix}}
\def\begali#1{\begin{align}{#1}\end{align}}
\def\begalis#1{\begin{align*}{#1}\end{align*}}

\def\calc{{\mathcal C}}
\def\calf{{\mathcal F}}

\def\calb{{\mathcal B}}

\def\call{{\mathcal L}}
\def\cald{{\mathcal D}}
\def\calz{{\mathcal Z}}
\def\hal{\frac{1}{2}}

\def\sb{s_{\mathtt{b}}}
\def\calh{{\mathcal{H}}}
\def\bp{{s}}
\def\L2e{{\mathcal L}_{2e}}

\def\rea{\mathbb{R}}

\def\diag{\mbox{diag}}
\def\adj{\mbox{adj}}
\def\et{\epsilon_t}

\def\l2{{\mathcal L}_2}
\def\l2e{{\cal L}_{2e}}

\def\rea{\mathbb{R}}

\def\diag{\mbox{diag}}
\def\col{\mbox{col}}

\def\hgrad{\mathcal{G}_{\texttt{grad}}}



\def\TAC{{\it IEEE Trans. Automatic Control}}
\def\TIE{{\it IEEE Trans. Industrial Electronics}}

\def\IJC{{\it International Journal of Control}}
\def\CDC{{\it IEEE Conference on Decision and Control}}
\def\SCL{{\it Systems \& Control Letters}}

\def\CST{{\it IEEE Trans. Control Systems Technology}}


\usepackage{color}
\newcommand{\bow}[1]{{\color{blue} #1}}
\newcommand{\blue}[1]{{\color{blue} #1}}


\usepackage[prependcaption,colorinlistoftodos]{todonotes}


\begin{document}

\begin{frontmatter}

\title{\Large On Generation of Virtual Outputs via Signal Injection: Application to Observer Design for Electromechanical Systems}

\author[SJTU,Supelec]{Bowen Yi}\ead{b.yi@outlook.com}
\author[Supelec,ITMO]{Romeo Ortega}\ead{ortega@lss.supelec.fr}
\author[Supelec]{Houria Siguerdidjane}\ead{houria.siguerdidjane@centralesupelec.fr}
\author[Supelec]{Juan E. Machado}\ead{juan.machado@l2s.centralesupelec.fr}
\author[SJTU,PC]{Weidong Zhang*}\ead{wdzhang@sjtu.edu.cn}

\address[SJTU]{Department of Automation, Shanghai Jiao Tong University, Shanghai 200240, China}  %
\address[Supelec]{Laboratoire des Signaux et Syst\`emes, CNRS-CentraleSup\'elec, Gif-sur-Yvette 91192, France}
\address[ITMO]{Department of Control Systems and Informatics, ITMO University, St. Petersburg 197101, Russia}
\address[PC]{Peng Cheng Laboratory, Shenzhen 518066, China}

\begin{keyword}
Signal Injection; Observer Design; Nonlinear Systems; Electromechanical Systems.
\end{keyword}
%

\begin{abstract}
Probing signal injection is a well-established technique to extract additional information from a weakly (or non) observable dynamical system. Using averaging theory, a framework to analyse such schemes  for general nonlinear systems has been recently proposed in [Combes {\em et. al.}, 2016], where it is shown that the signal injection may be used to generate a new high frequency component of the systems output that can be used for state observation or controller design. A key step for the success of this technique is the implementation of a filter to reconstruct this virtual output from the measurement of the overall systems output. The main contribution of this paper is to propose a new filter with guaranteed convergence properties that outperforms the classical designs. The method is applied to a general class of electromechanical systems, and its performance is assessed via simulations and experiments on the benchmark example of a 1-dof magnetic levitation system.
\end{abstract}

\end{frontmatter}
%


%
\section{Introduction}
\label{sec1}
%

Many high-performance controller design techniques for nonlinear systems rely on the availability of the systems state. In many practical applications, the installation of sensors is stymied by cost and technological considerations. Observer design can then be invoked to reconstruct the state via input and output measurements. It is clear, however, that the system must satisfy some observability property for the success of this approach, see {\em e.g.}, \cite{BESbook,GAUKUP}. In the case when the latter is not satisfied, it is still possible to extract additional information from the system via probing signal injection---that is a well-established technique widely used in several applications.

A typical example is the estimation of position in electrical motors, which are non-observable at zero speed \cite{GLUDEL,MARbook}.  It is well-known that injecting a high frequency signal in the voltage and measuring the output current it is possible to generate a reasonable estimate of the rotor angle \cite{HOLtie,NAM,VAS}. Indeed, due to existence of rotor saliency and/or flux saturation, the injected signal at the voltage induces a decomposition of the measured current into a low and a high frequency component, from knowledge of the latter it is possible to recover the rotor angle, which is a necessary information for high performance controller design. Clearly, a key step for the successful application of this technique is the identification of the high frequency component of the current, a task that is typically accomplished with a combination of linear time-invariant (LTI) high-pass  filters and low-pass filters \cite{NAM}.

Another example of practical interest is magnetic levitation (MagLev) systems, where position control of the levitated object is of paramount importance and existing position sensors are expensive and unreliable. The interested reader is referred to \cite{GLUetalcep,MASetalifac,MIZetalcst} for a review of the existing literature on sensorless control of MagLev systems reported in the control community and to \cite{RANetaltia,SCHMAS} for results found in applications literature.

The main contribution of this paper is to propose a new filtering technique to identify the high frequency component of the output induced by the signal injection, which is applicable for general nonlinear systems. Instrumental for our developments is the use of the mathematical formalism proposed in the recent interesting paper \cite{COMetalacc} where, invoking second order averaging theory \cite{SANbook}, it is shown that injecting a periodic high-frequency signal in the systems input generates an output consisting of the sum of a low and a high frequency component---the first one corresponding to the output of the systems average dynamics and the second one called virtual output. A sliding-window filtering technique to identify the virtual output is then proposed in  \cite{COMetalacc}, which is used to design an (augmented) output-feedback control law. See also \cite{YIetalscl} where a slight extension of this technique is used to enlarge the domain of applicability of the parameter estimation-based observer proposed in \cite{ORTetalscl}.

The filter design technique proposed in this paper relies on the following two key observations. First, that the task of reconstructing the virtual output can be recast as a problem of estimation of parameters in a linear regression model.\footnote{See \cite{FORetal} where a similar ``identification-based" approach is pursued within the context of robust output regulation.} Second, the observation that the particular form of the regressor can be exploited to apply the  dynamic regressor extension and mixing (DREM) estimator proposed in \cite{ARAetaltac}, with some suitable operators that, on one hand, generate extended regressor and, on the other hand, guarantee the excitation conditions needed for exponential parameter convergence. Three significant advantages of the proposed filter are, first, that the exponential stability property makes the filter robust, a property that should be contrasted with the sensitivity to measurement noise observed for the sliding-window filter of \cite{COMetalacc}. Second, that due to the fact that the filter implementation relies on the use of linear time-varying (LTV) filters, it has a very simple practical implementation. Third, the new filter has a clear connection with the standard approach of high-pass/low-pass filtering universally adopted in practice, simplifying in this way the communication with the applied community.

In the second part of this paper we show how the new filtering technique can be applied to estimate the electrical coordinates of a general class of electro-mechanical systems (EMS), assuming that only the current and the voltage are measurable. This observation step is essential for the design of sensorless (also called self-sensing) controllers, which is a topic of great interest to the applied \cite{CHOetaltpe,HOLtie,RANetaltia,SCHMAS} and the control theory \cite{BERPRA,COMetalacc,GLUetalcep,MARbook,MASetalifac,MIZetalcst,ORTetalcst,VERetal} communities. Indeed, it is widely recognized that in motors, as well as MagLev systems, the key step to observe the mechanical coordinates is the reconstruction of the electrical coordinates, {\em i.e.}, fluxes and charges. The new observer is applied to the optical switch and the one-degree-of-freedom (1-dof) MagLev systems, with illustrative experiments carried out for the latter.

The remainder of the paper is organized as follows. Section \ref{sec2} gives some preliminaries on the analysis techniques of \cite{COMetalacc} and the DREM estimator of  \cite{ARAetaltac}. The main result of the paper, namely a new filter to reconstruct the virtual output, is presented in Section \ref{sec3}. In Section \ref{sec4} we apply this filter to a class of EMS, which includes the practical examples discussed in Section \ref{sec5}, where we present a detailed discussion, including experimental evidence, of its application for the critically important case of a 1-dof MagLev system. The paper is wrapped-up with concluding remarks and future research directions in  Section \ref{sec6}.
\\ \ \\
\textbf{Caveat.} An abridged version of this paper was reported in \cite{YIetalcdc}.
\\ \ \\
\textbf{Notation.}  $\epsilon_t$ is an exponentially decaying term with a proper dimension. $I_q$ is the $q$-dimensional identity matrix. $\det\{A\}$ and $\adj\{A\}$ represent the determinant and the adjunct matrix of a square matrix $A$. The Laplace transform symbol $s$ is used also to denote the derivative operator ${d \over dt}$. For an operator $\mathcal{H}$ acting on a signal we use the notation $\calh[\cdot](t)$, when clear from the context, the argument $t$ is omitted. $\mathcal{O}$ is the uniform big O symbol, that is, $f(z,\varepsilon)$ if and only if $|f(z,\varepsilon)|\le C\varepsilon$ for a constant $C$ independent of $z$ and $\varepsilon$. All mappings are assumed smooth enough. We define the operator $\nabla :=(\partial  /\partial x)^\top$.


%
\section{Preliminaries}
\lab{sec2}
%
In this section we briefly review the two main tools used for the development of  the proposed virtual output filter.
\subsection{Signal injection and virtual outputs}
\label{subsec21}
Let us first recall some results on the signal injection method proposed in \cite{COMetalacc}. Consider the nonlinear system
\begin{equation}
\label{sys}
\dot{x}  = f(x) + g(x)u,\quad
y  = h(x),
\end{equation}
where $x\in \rea^n,\;y\in \rea^{m}$, and $u \in\rea^{p}$. To generate the virtual output we apply the following input to the system
\begin{equation}
\label{eq1}
 u  = u_C + \sb, \quad \sb = s\bigg( {t \over \varepsilon}\bigg) \frak{b},
\end{equation}
where $u_C$ is the plants nominal input, typically the output of the {controller}, and $\sb$ is a high-frequency signal, with $s(\cdot)$ a 1-periodic, zero mean function, $\varepsilon \in (0,1)$ is a small constant, and $\frak{b} \in\rea^{p}$ is a free constant ``scaling'' vector. A key result of the signal injection method in \cite{COMetalacc}, which is established by second-order averaging analysis, is as follows. See also \cite{YIetalscl} for the multi-input case with application to parameter estimation-based observer design.

\begin{proposition}
\rm \label{prop1}
Consider the system \eqref{sys}, \eqref{eq1} where $u_C$ is a signal such that all state trajectories are bounded.  There exists $\varepsilon^*>0$, such that $\forall \varepsilon\in(0,\varepsilon^*]$,\footnote{When the closed-loop system under the feedback $u=u_C(x)$ is asymptotically stable, \eqref{higlow} is true in the time interval $[0,\infty)$; in the general case, it is true in $[0, \mathcal{O}({1 \over \varepsilon}))$.}
 \begali{
 \lab{higlow}
 x = \bar{x} + \varepsilon S  g(\bar x)\frak{b} + \mathcal{O}(\varepsilon^2),
 }
 is satisfied, where
 \begequ
 \lab{S}
 S(t) := S_0\bigg({t\over \varepsilon}\bigg),\quad
 S_0(t)  := \int_{0}^{t} s(\tau) d\tau - \int_{0}^{1} \int_{0}^{\sigma}s(\tau)d\tau d\sigma,
 \endequ
and $\bar x$ is generated as
$
\dot {\bar x}=f(\bar x)+g({\bar x})u_C
$
with $\bar x(0) = x(0)$, assuming $\int_{0}^{1} \int_{0}^{\sigma}s(\tau) d\tau d\sigma=0$. Furthermore, we have the identity
\begequ
\label{rel1}
y = \bar{y} + \varepsilon S y_v + \mathcal{O}(\varepsilon^2),
\endequ
where
\begequ
\label{yv}
\bar y :=h({\bar x}), \quad y_v  := \nabla h^\top(x)  g(x) \frak{b}.
\endequ
\hfill{$\triangleleft$}
\end{proposition}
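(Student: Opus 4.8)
Identity \eqref{higlow} is an instance of second-order averaging, so the plan is to bring \eqref{sys}--\eqref{eq1} into the standard averaging normal form and then simply read off the first near-identity correction. First I would rescale time by $\tau:=t/\varepsilon$, which turns the closed loop $\dot x=f(x)+g(x)u_C+g(x)s(t/\varepsilon)\frak{b}$ into $\tfrac{dx}{d\tau}=\varepsilon\,\Phi(\tau,x)$ with $\Phi(\tau,x):=f(x)+g(x)u_C+s(\tau)\,g(x)\frak{b}$, which is $1$-periodic in $\tau$ (with $u_C$ treated as a state feedback $u_C(x)$, as in the footnote, or as an exogenous signal frozen over one fast period). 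Since $s$ has zero mean, the averaged vector field is $\Phi^0(x)=f(x)+g(x)u_C$, whose flow is exactly the auxiliary trajectory $\bar x$ of the statement; boundedness of the trajectories (assumed) confines $x$, $\bar x$ and the transformed variables below to a compact set on which $f,g,h$ and their derivatives are bounded, which is all the averaging machinery needs.

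Next I would introduce the near-identity change of coordinates $x=w+\varepsilon\,U_1(\tau,w)$, where $U_1(\tau,w)$ is the zero-mean (over one period in $\tau$) primitive of $\Phi(\tau,w)-\Phi^0(w)=s(\tau)\,g(w)\frak{b}$. That primitive is nothing but $U_1(\tau,w)=\big(\int_0^\tau s(\sigma)\,d\sigma-\int_0^1\!\int_0^\sigma s(\sigma')\,d\sigma'\,d\sigma\big)\,g(w)\frak{b}=S_0(\tau)\,g(w)\frak{b}$, i.e.\ precisely the correction term appearing in \eqref{higlow} once one returns to $t$ via \eqref{S}. A routine computation---Taylor-expand $\Phi(\tau,w+\varepsilon U_1)$ to first order, invert $I+\varepsilon\,\partial_w U_1$, and use $\partial_\tau U_1=\Phi-\Phi^0$---shows the transformed dynamics read $\tfrac{dw}{d\tau}=\varepsilon\,\Phi^0(w)+\varepsilon^2 R(\tau,w)+\mathcal{O}(\varepsilon^3)$, where $R(\tau,w)=\partial_w\Phi(\tau,w)\,U_1(\tau,w)-\partial_w U_1(\tau,w)\,\Phi^0(w)$ is again $1$-periodic in $\tau$ and bounded on the compact region.

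The key observation is that the period-average $\langle R\rangle$ vanishes. Indeed $\langle U_1\rangle=g(w)\frak{b}\,\langle S_0\rangle=0$ by the normalization built into $S_0$ in \eqref{S}, and since $S_0'=s$ with $S_0$ $1$-periodic one has $\langle s\,S_0\rangle=\tfrac12\big[S_0^2\big]_0^1=0$; these two cancellations annihilate every term of $\langle R\rangle$ (each term carries a factor $S_0(\tau)$ or $s(\tau)S_0(\tau)$ against a $\tau$-independent coefficient). Hence the \emph{second}-order averaged correction is zero, and the second-order averaging theorem of \cite{SANbook} (as used in \cite{COMetalacc}) applies verbatim to give $x(t)=\bar x(t)+\varepsilon\,U_1\!\big(t/\varepsilon,\bar x(t)\big)+\mathcal{O}(\varepsilon^2)=\bar x+\varepsilon S(t)\,g(\bar x)\frak{b}+\mathcal{O}(\varepsilon^2)$, which is \eqref{higlow}. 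The initial conditions match because the hypothesis $\int_0^1\!\int_0^\sigma s(\sigma')\,d\sigma'\,d\sigma=0$ is exactly $S_0(0)=0$, so $U_1(0,\cdot)=0$ and $w(0)=x(0)=\bar x(0)$; the validity interval---compact $t$-intervals unconditionally, $[0,\mathcal{O}(1/\varepsilon))$ in general and $[0,\infty)$ when the averaged closed loop is asymptotically stable---is the one delivered by that averaging theory, as recorded in the footnote of \eqref{higlow}. Finally \eqref{rel1}--\eqref{yv} follow by inserting \eqref{higlow} into $y=h(x)$ and Taylor-expanding, $y=h(\bar x)+\varepsilon S(t)\,\nabla h^\top(\bar x)g(\bar x)\frak{b}+\mathcal{O}(\varepsilon^2)$, and using $x=\bar x+\mathcal{O}(\varepsilon)$ to replace $\bar x$ by $x$ inside the $\mathcal{O}(\varepsilon)$ term at the cost of an extra $\mathcal{O}(\varepsilon^2)$.

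The step I expect to be the real obstacle is pinning down the $\mathcal{O}(\varepsilon^2)$ order of the remainder: the first near-identity transformation only reduces the $\mathcal{O}(1)$ fast forcing to an $\mathcal{O}(\varepsilon)$ fast residue, and a naive estimate of $w-\bar x$ is merely $\mathcal{O}(\varepsilon)$, which would wreck the claim. It is the very special form of the injection---one scalar periodic signal $s(\tau)$ multiplying the fixed direction $g(x)\frak{b}$, together with the identities $\langle S_0\rangle=\langle s\,S_0\rangle=0$---that forces the second-order averaged correction to vanish and thereby upgrades the error to $\mathcal{O}(\varepsilon^2)$. Everything else is bookkeeping, which I would delegate to the quantitative second-order averaging statements in \cite{SANbook} rather than redo.
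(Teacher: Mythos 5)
Your proposal is correct and follows essentially the route the paper relies on: the paper does not prove Proposition \ref{prop1} itself but quotes it from \cite{COMetalacc} as a consequence of second-order averaging \cite{SANbook}, and your reconstruction---time rescaling, near-identity transformation with the zero-mean primitive $U_1=S_0(\tau)g(w)\frak{b}$, and the cancellations $\langle S_0\rangle=\langle sS_0\rangle=0$ that make the second-order averaged correction vanish and yield the $\mathcal{O}(\varepsilon^2)$ remainder---is exactly that argument, including the correct handling of the initial-condition matching via $S_0(0)=0$ and the substitution $\bar x \to x$ in $y_v$ at cost $\mathcal{O}(\varepsilon^2)$.
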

%

Notice that if we substitute $t=0$ into \eqref{higlow}-\eqref{S}, it is easy to see that
$
\int_{0}^{1} \int_{0}^{\sigma}s(\tau) d\tau d\sigma=0
$
is a necessary condition for $x(0) = \bar{x}(0)$. Hence, the last term in the primitive function $S_0(\cdot)$ disappears.

The next step in the signal injection method is to estimate, from the measurement of $y$, the virtual output $y_v$. To simplify the notation, and with some obvious abuse of notation, in the sequel we omit the clarification that the averaging analysis only ensures the existence of an upper bound on $\varepsilon$ such that \eqref{rel1} holds, and we simply assume that $\varepsilon$ is small enough.
\subsection{Dynamic regressor extension and mixing}
\label{sec2-2}
DREM is a novel approach for estimation of the parameters in a regression model proposed in \cite{ARAetaltac}. The main feature of DREM is that it allows to generate $q$ {\em scalar} regression models, where $q$ is the dimension of the unknown parameter vector. In this way, parameter convergence is guaranteed without the standard persistency of excitation (PE) assumption\footnote{We recall that a bounded vector signal $\phi \in \rea^q$ is said to be PE if there exist $\delta>0$ and $T>0$ such that $\int_t^{t+T}\phi(\tau)\phi^\top(\tau)d\tau \geq \delta I_q$ for all $t \geq 0$.}  on the regressor, which is necessary in classical gradient or least-squares estimators \cite{SASBOD}.

The main result of DREM for linear regressions is summarized in the following proposition.

\begin{proposition}
\rm\label{prop_DREM} \cite{ARAetaltac}
Consider the $q$--dimensional linear regression
\begequ
\label{y}
c=\phi^\top \theta,
\endequ
where $c \in \rea$ and $\phi \in \rea^q$ are known, bounded functions of time and $\theta \in \rea^q$ is a vector of unknown, constant parameters. Introduce a {linear, single-input $q$-output, $\call_\infty$--stable} operator $\calh: \call_\infty \to \call^q_\infty,$ and define the signals $C \in \rea^q$ and $\Phi \in \rea^{q \times q}$ as
$
C(t)  := \calh [c](t), ~
{\Phi(t)  :=\big[ \calh [\phi_1](t) ~|~ \ldots ~|~ \calh [\phi_q](t) \big]  } .
$
The gradient-descent estimator\footnote{For brevity, the clarification $i \in \{1,2,\dots,q\}$ is omitted in the sequel.}
$$
\dot{\hat{\theta}}_i = \gamma_i\Delta (\mathcal{C}_i  - \Delta \hat{\theta}_i),\;i \in \{1,2,\dots,q\},
$$
with adaptation gains $\gamma_i>0$, and the signals $\Delta \in \rea$ and $\calc \in \rea^q$ defined as $\Delta :=\det \{\Phi\}$ and $\mathcal{C} := \adj\{\Phi\} C$, guarantees
\begin{itemize}
\item (element-wise parametric error monotonicity) The estimation error $\tilde \theta:=\hat \theta - \theta$ satisfies
$$
|\tilde \theta_i(t_b)| \geq |\tilde \theta_i(t_a)|,\;\forall\; t_a \geq t_b \geq 0;
$$
\item (condition for parameter convergence)
$$
 \lim_{t\to \infty} \tilde \theta_i(t)=0\quad \Longleftrightarrow \quad \Delta \notin \call_2.
$$
\end{itemize}
Moreover, if $\Delta$ is PE then the parameter convergence is {\em exponential}.
\hfill{$\triangleleft$}
\end{proposition}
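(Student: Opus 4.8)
The plan is to exploit the linearity of $\calh$ and the DREM ``mixing'' step to collapse the $q$-dimensional regression into $q$ independent scalar ones, and then to solve the resulting scalar linear time-varying error equation in closed form, from which all three conclusions drop out.

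First I would apply the operator $\calh$ to both sides of \eqref{y}. Writing $c=\sum_{j=1}^q \phi_j\theta_j$ and using that $\calh$ is linear and $\theta$ is constant, superposition gives $C=\Phi\theta+\et$, where the exponentially decaying term collects the effect of the (arbitrary) initial conditions of $\calh$ and may be set to zero by choosing zero initialization; I will carry it as $\et$ and note it is irrelevant to the asymptotic claims. Left-multiplying by $\adj\{\Phi\}$ and using the standard identity $\adj\{\Phi\}\Phi=\det\{\Phi\}I_q$ yields the mixed model $\calc=\Delta\theta+\et$, that is, the $q$ \emph{scalar} regressions $\calc_i=\Delta\theta_i+\et$. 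At this point I would also record that, since $c$ and $\phi$ are bounded and $\calh$ is $\linf$--stable, the signal $\Phi$ is bounded, hence $\Delta=\det\{\Phi\}$ and $\calc=\adj\{\Phi\}C$ are bounded as well, so the gradient recursion is well posed.

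Next, substituting $\calc_i=\Delta\theta_i$ into $\dot{\htheta}_i=\gamma_i\Delta(\calc_i-\Delta\htheta_i)$ and using $\dot\theta_i=0$ gives the scalar error dynamics $\dot{\tilde\theta}_i=-\gamma_i\Delta^2\tilde\theta_i$, whose solution is explicit,
\[
\tilde\theta_i(t)=\exp\!\Big(-\gamma_i\int_0^t\Delta^2(\tau)\,d\tau\Big)\tilde\theta_i(0).
\]
All three assertions follow from this formula: (i) the exponent is nonpositive and nonincreasing in $t$ because $\Delta^2\ge 0$, hence $t\mapsto|\tilde\theta_i(t)|$ is nonincreasing, which is the element-wise monotonicity; (ii) for $\tilde\theta_i(0)\ne 0$ one has $\tilde\theta_i(t)\to 0$ if and only if $\int_0^\infty\Delta^2(\tau)\,d\tau=\infty$, which is precisely $\Delta\notin\l2$; (iii) if $\Delta$ is PE with window $T$ and level $\delta$, then splitting $[0,t]$ into $\lfloor t/T\rfloor$ consecutive windows of length $T$ gives $\int_0^t\Delta^2\ge(\lfloor t/T\rfloor)\delta\ge(t/T-1)\delta$, whence $|\tilde\theta_i(t)|\le e^{\gamma_i\delta}e^{-(\gamma_i\delta/T)t}|\tilde\theta_i(0)|$, i.e.\ exponential convergence.

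The only genuinely delicate point is the ``only if'' half of (ii): one must observe that $\Delta\in\l2$ makes $\int_0^t\Delta^2$ converge to a finite limit $L$, so that $\tilde\theta_i(t)\to e^{-\gamma_i L}\tilde\theta_i(0)$, which is nonzero whenever $\tilde\theta_i(0)\ne 0$; thus the stated equivalence is to be read as holding for all (generic) initializations. Everything else---the adjunct identity, $\linf$--stability giving boundedness of $\Phi$, $\Delta$, $\calc$ and (via the monotonicity of $|\tilde\theta_i|$) of $\htheta$, and the integral estimate under PE---is routine, and the $\et$ terms are absorbed without affecting any of the limits.
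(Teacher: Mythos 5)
Your proposal is correct, and it is essentially the canonical argument: the paper itself states this proposition without proof (it is quoted from the cited DREM reference \cite{ARAetaltac}), and your reconstruction—applying $\calh$ by superposition to get $C=\Phi\theta$, mixing with $\adj\{\Phi\}$ to obtain the scalar models $\calc_i=\Delta\theta_i$, and reading all three claims off the explicit solution $\tilde\theta_i(t)=e^{-\gamma_i\int_0^t\Delta^2(\tau)d\tau}\tilde\theta_i(0)$—is exactly that standard proof. Your side remarks (the exponentially decaying initialization term being negligible, and the ``only if'' direction of the $\call_2$ equivalence holding for $\tilde\theta_i(0)\neq 0$) match how the original reference treats these points, so nothing is missing.
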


The elements of the operator $\mathcal{H}$ may be simple, exponentially stable {LTI filters} of the form
$
\calh_i({s})=\frac{\alpha_i}{{s} + \beta_i},
$
with $\alpha_i\ne 0$, $\beta_i>0$. Another option of interest is delay operators, that is
$
[\calh_i(\cdot)](t):=(\cdot)(t-d_i),
$
where $d_i>0$. See \cite{ORTetalauto} for the case of general LTV operators and the connection of DREM with Luenberger functional observers.

%
\section{A New Procedure to Reconstruct $y_v$}
\lab{sec3}
%
In this section, we give the main result of this note, namely, a DREM-based filter to reconstruct the virtual output $y_v$.

\subsection{A linear regressor viewpoint}
\lab{subsec31}
The first step to apply DREM is to obtain the linear regression model. For, we make the key observation that there exists a time-scale separation between the probing signal $S$ that, by definition, has a high frequency, and the signals $\bar y$ and $y_v$. This motivates us to view \eqref{rel1} as an LTV regression perturbed by a small term $\mathcal{O}(\varepsilon^2)$. Whence,  we write \eqref{rel1} as
\begequ
\label{iphithe}
\begin{aligned}
    y & = \phi \theta+ \mathcal{O}(\varepsilon^2)\\
    \theta & =\begmat{\theta_1 \\ \theta_2}:= \begmat{\bar{y} \\ \varepsilon y_v} \in \rea^{2m},\;
    \phi := \begmat{I_{m} & S(t)I_{m}} \in \rea^{m \times 2m},
\end{aligned}
\endequ
with $y$ the measurable signal, and $\phi$ and $\theta$ playing the roles of known regressor and (slowly time-varying) parameters to be estimated.

\subsection{Generation of a linear regressor for $\theta_2$ only}
\lab{subsec32}
Although it is possible to show that $\phi$ is PE and, consequently, apply a gradient estimator to the linear regression \eqref{iphithe}, transient performance can be improved observing that we are interested in reconstructing only $y_v$, {\em i.e.}, only the estimation of $\theta_2$ is of interest. To achieve this end, we need to construct a new linear regression where only the parameter $\theta_2$ appears, which is possible following the DREM methodology, with suitably chosen operators.

Similarly to \cite{COMetalacc,YIetalscl} we consider the use of the weighted zero-order hold (WZOH) operator $\calz_w$, which is parameterized by $w>0$ and, acting on an input signal $v$, is defined as
\begequ
\lab{zw}
\dot{\chi}(t)  = v(t), \quad
\calz_w[v](t)  ={1\over w} \big[ \chi(t) - \chi(t- w) \big].
\endequ
The lemma below describes the action of the WZOH operator on the signal \eqref{rel1}.

\begin{lemma}
\lab{lem1}\rm
{\cite{COMetalacc}} Consider the signal \eqref{rel1} and the operator $\calz_w$ given in \eqref{zw}. Then,
\begequ
\lab{cla}
\calz_w[y](t) = \bar{y}(t - {w\over 2}) + \mathcal{O}(\varepsilon^2)
\endequ
holds for any $w= n\varepsilon$ with positive integer $n$.
\hfill{$\triangleleft$}
\end{lemma}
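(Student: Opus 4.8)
The plan is to use the integral (moving-average) representation of the WZOH operator and exploit the time-scale separation between the fast probing signal $S$ and the slow signals $\bar y$, $y_v$. First I would integrate $\dot\chi=v$ and substitute into \eqref{zw}; the initial condition $\chi(0)$ cancels and one gets $\calz_w[v](t)=\tfrac1w\int_{t-w}^{t}v(\tau)\,d\tau$. Applying this to the decomposition \eqref{rel1} and using linearity,
\[
\calz_w[y](t)=\frac1w\int_{t-w}^{t}\bar y(\tau)\,d\tau+\frac{\varepsilon}{w}\int_{t-w}^{t}S(\tau)\,y_v(\tau)\,d\tau+\calz_w[\mathcal{O}(\varepsilon^2)](t),
\]
and the last term is $\mathcal{O}(\varepsilon^2)$ since a moving average over a window of length $w$ of a uniformly $\mathcal{O}(\varepsilon^2)$ signal is again $\mathcal{O}(\varepsilon^2)$. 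It remains to treat the first two terms.

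For the first term I would Taylor-expand $\bar y(\tau)$ about the window midpoint $t-\tfrac w2$. Because $\bar y$ is the output of the $\varepsilon$-independent averaged system $\dot{\bar x}=f(\bar x)+g(\bar x)u_C$ along bounded trajectories, its first two derivatives are bounded uniformly in $\varepsilon$; the first-order term integrates to zero by symmetry of the window, so $\tfrac1w\int_{t-w}^{t}\bar y(\tau)\,d\tau=\bar y(t-\tfrac w2)+\mathcal{O}(w^2)$, and since $w=n\varepsilon$ this is $\bar y(t-\tfrac w2)+\mathcal{O}(\varepsilon^2)$.

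The crux is the oscillatory second term. Writing $S(\tau)=S_0(\tau/\varepsilon)$, where $S_0$ is $1$-periodic and, by the normalization built into \eqref{S}, has zero mean, I would note that the window $[t-w,t]$ with $w=n\varepsilon$ spans \emph{exactly} $n$ full periods of $S$. Split $y_v(\tau)=y_v(t)+\big(y_v(\tau)-y_v(t)\big)$: the $y_v(t)$ part gives $y_v(t)\int_{t-w}^{t}S_0(\tau/\varepsilon)\,d\tau$, which vanishes identically after the change of variable $u=\tau/\varepsilon$ because $S_0$ integrates to zero over any whole number of periods; the remainder is bounded by $\|S_0\|_\infty\|\dot y_v\|_\infty w^2$ using $|y_v(\tau)-y_v(t)|\le\|\dot y_v\|_\infty w$ on the window (and $\dot y_v$, like $\dot{\bar y}$, is bounded uniformly in $\varepsilon$, either because $\dot x$ is bounded or by replacing $y_v(x)$ with $y_v(\bar x)$ at the cost of an $\mathcal{O}(\varepsilon^2)$ term in \eqref{rel1}). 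Multiplying by $\varepsilon/w$ converts this into $\varepsilon w\,\mathcal{O}(1)=\mathcal{O}(\varepsilon^2)$. Collecting the three pieces gives \eqref{cla}.

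I expect the main point requiring care—rather than a genuine difficulty—to be the bookkeeping of the $\mathcal{O}$-orders: one must keep in mind that $w$ is itself $\mathcal{O}(\varepsilon)$, so factors $w^2$ or $\varepsilon w$ are already $\mathcal{O}(\varepsilon^2)$, and one must verify that the constants hidden in $\|\dot{\bar y}\|_\infty$, $\|\dot y_v\|_\infty$ and $\|S_0\|_\infty$ are independent of $\varepsilon$—which is exactly why it matters that $\bar y$ comes from the $\varepsilon$-free averaged dynamics and that $S_0$ is a fixed $1$-periodic function. The \emph{exact} cancellation of the high-frequency term, as opposed to a merely approximate one, is what forces the hypothesis $w=n\varepsilon$; for a generic window width only an $\mathcal{O}(\varepsilon)$ bound would survive.
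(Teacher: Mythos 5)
Your argument is correct. Note that the paper itself does not prove Lemma \ref{lem1}---it is imported verbatim from \cite{COMetalacc}---so there is no in-paper proof to compare against; your derivation (moving-average form of $\calz_w$, midpoint Taylor expansion of $\bar y$, and exact cancellation of the $\varepsilon S y_v$ term over the $n$ whole periods contained in a window of width $w=n\varepsilon$) is the standard route and matches the way the result is used later in the paper. The one hypothesis you invoke that deserves flagging is the uniform-in-$\varepsilon$ bound on $\dot y_v$; this is legitimate here (bounded trajectories plus fixed-amplitude injection give a uniform bound on $\dot x$, or one can swap $y_v(x)$ for $y_v(\bar x)$ at $\mathcal{O}(\varepsilon^2)$ cost, as you say), and the paper makes the same requirement explicit as Assumption \ref{ass1} when the lemma is exploited in Proposition \ref{pro1}. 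Your closing observation that $w=n\varepsilon$ is what buys exact, rather than $\mathcal{O}(\varepsilon)$, cancellation of the oscillatory term is also on point.
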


In words, Lemma \ref{lem1} shows that with the WZOH operator we can extract from $y$ the signal $\bar y$, with a delay of ${\omega \over 2}$. Now, from \eqref{rel1} we see that the action of a delay operator
$\cald_d$, with parameter $d>0$,
\begequ
\lab{hd}
\mathcal{D}_d[v](t)= v(t-d)
\endequ
on $y$ yields
\begequ
\lab{ddony}
\cald_d[y](t) = \bar{y}(t-d) + \varepsilon S(t-d) y_v(t-d) + \mathcal{O}(\varepsilon^2).
\endequ

The desired linear regression for $\theta_2$ only is given in the following fact, whose proof is established from direct inspection of \eqref{cla} and \eqref{ddony}.

\begin{fact}\em
Consider \eqref{rel1} and define the signal
\begequ
\label{Y}
Y(t) := \mathcal{D}_d[y](t) -  \mathcal{Z}_{2d}[y](t),
\endequ
with the operators $\calz_{2d}$ and $\cald_d$ defined in \eqref{zw} (with $w=2d$) and \eqref{hd}, respectively. Then,
\begequ
\lab{scareg}
Y(t) = S(t-d) \theta_2(t-d) + \mathcal{O}(\varepsilon^2).
\endequ
\end{fact}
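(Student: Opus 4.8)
The plan is to obtain \eqref{scareg} by simply combining two expansions that are already available—equation \eqref{ddony}, which describes the action of the delay operator on $y$, and Lemma \ref{lem1}, which describes the action of the WZOH operator on $y$—and then observing that the low-frequency component $\bar{y}$ cancels in the difference $Y = \cald_d[y] - \calz_{2d}[y]$. First I would apply $\cald_d$ to the identity \eqref{rel1}; since $\cald_d$ is a pure time shift it maps an $\mathcal{O}(\varepsilon^2)$ term to an $\mathcal{O}(\varepsilon^2)$ term unchanged, so we recover \eqref{ddony}, and using $\theta_2 = \varepsilon y_v$ from \eqref{iphithe} this reads
\begin{equation*}
\cald_d[y](t) = \bar{y}(t-d) + S(t-d)\,\theta_2(t-d) + \mathcal{O}(\varepsilon^2).
\end{equation*}

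Next I would invoke Lemma \ref{lem1} with the window $w = 2d$. The lemma requires $w$ to be an integer multiple of $\varepsilon$, so one fixes $d$ so that $2d/\varepsilon$ is a positive integer—a standing design choice that entails no loss of generality, since $\varepsilon$ is a free small parameter. Then \eqref{cla} gives directly
\begin{equation*}
\calz_{2d}[y](t) = \bar{y}\!\left(t-\tfrac{2d}{2}\right) + \mathcal{O}(\varepsilon^2) = \bar{y}(t-d) + \mathcal{O}(\varepsilon^2).
\end{equation*}
The mechanism behind this—already contained in the proof of Lemma \ref{lem1} and hence assumed here—is that $\calz_{2d}$ averages the high-frequency term $\varepsilon S y_v$ over an integer number of periods of $s(\cdot/\varepsilon)$, so up to the slow variation of $y_v$ it contributes only $\mathcal{O}(\varepsilon^2)$; moreover $\calz_{2d}$ has $\call_\infty$ gain at most one, so it does not amplify the $\mathcal{O}(\varepsilon^2)$ residual of \eqref{rel1}.

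Subtracting the second display from the first, as dictated by the definition \eqref{Y}, the terms $\bar{y}(t-d)$ cancel exactly and we are left with
\begin{equation*}
Y(t) = S(t-d)\,\theta_2(t-d) + \mathcal{O}(\varepsilon^2),
\end{equation*}
which is \eqref{scareg}. I do not expect a genuine obstacle here: the argument is pure bookkeeping, and the only points requiring care are (i) choosing $d$ compatibly with the integer-multiple hypothesis of Lemma \ref{lem1}, and (ii) noting that both $\cald_d$ and $\calz_{2d}$ are $\call_\infty$-stable, so the $\mathcal{O}(\varepsilon^2)$ error from \eqref{rel1} is preserved rather than magnified when the operators are applied.
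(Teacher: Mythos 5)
Your proposal is correct and follows exactly the paper's own argument: the paper states that the Fact is established ``from direct inspection of \eqref{cla} and \eqref{ddony},'' i.e., by applying Lemma \ref{lem1} with $w=2d$ and the delay expansion \eqref{ddony}, then cancelling $\bar{y}(t-d)$ in the difference, just as you do. Your added remarks on choosing $2d$ as an integer multiple of $\varepsilon$ (satisfied by the later choice $d=\varepsilon$) and on the $\call_\infty$-stability of the two operators are consistent with the paper and not a deviation.
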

We make the important observation that the role of the regressor in \eqref{scareg} is played by the scalar signal $S$, which is ``rich" by construction. More precisely, it satisfies
\begequ
\lab{sispe}
    \int_{t}^{t+ {1 \over \varepsilon}}S^2(\tau) d\tau \ge \delta_0,
\endequ
for all $t \geq 0$ and some $\delta_0>0$.

It is interesting to note that the regressor \eqref{scareg} can also be obtained applying {\em verbatim} the DREM procedure outlined in Proposition \ref{prop_DREM} with the $\call_\infty$-stable operator
$
\mathcal{H}: = \col( \mathcal{D}_{{d}} , \mathcal{Z}_{2d}).
$
For the sake of clarity, we have opted to present the alternative derivation of  \eqref{scareg} given above.
\subsection{DREM-based virtual output estimator}
\label{subsec33}
%
We are in position to propose the main result of the paper, that is a DREM-based filter to identify the virtual output $y_v$ from \eqref{rel1}. To present the proposition we need the following.

\begin{assumption}
\lab{ass1} \em
There exists a constant $c_v$, independent of $\varepsilon$, such that
$
|\dot{y}_v| \le c_v,
$
where $y_v$ is given in \eqref{yv}.
\hfill{$\triangleleft$}
\end{assumption}
%
\begin{proposition}
\label{pro1}\rm
Consider the system \eqref{sys}-\eqref{eq1} with bounded state trajectories verifying Assumption \ref{ass1}. Define the virtual output estimator
\begequ
\label{vir_out_filter}
    \dot{\hat{\theta}}_2  = \gamma S(t) \big[Y(t) - S(t) \hat{\theta}_2\big] , \quad
        \hat{y}_v  = {1 \over \varepsilon } \hat{\theta}_2,
\endequ
where $\gamma> {\gamma_\star \over \varepsilon}$ is a tuning gain for some $\gamma_\star>0$, $S$ is given in \eqref{S}, $Y$ in \eqref{Y}, and $\mathcal{Z}_{2d}$, $\mathcal{D}_d$,  are defined in \eqref{zw} and \eqref{hd}, respectively, with $d=\varepsilon$. Then, for any $c_v$ there always exists $\varepsilon$ globally guaranteeing
\begequ
\lab{tilyv}
\lim_{t\to \infty} |\hat{y}_v(t) - y_v(t) | \le \mathcal{O}({\varepsilon}) \quad \text{(exp.)}.
\endequ
\hfill{$\triangleleft$}
\end{proposition}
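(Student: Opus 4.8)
The plan is to analyze the error dynamics of the scalar gradient estimator \eqref{vir_out_filter} driven by the regression \eqref{scareg}, treating the $\mathcal{O}(\varepsilon^2)$ residual and the slow variation of $\theta_2$ as exogenous disturbances, and then exploit the richness property \eqref{sispe} of the scalar regressor $S$ to establish (uniform) exponential convergence of the nominal part, with the perturbations contributing an $\mathcal{O}(\varepsilon)$ ultimate bound. First I would write the estimation error $\tilde\theta_2 := \hat\theta_2 - \theta_2(t-d)$ and substitute $Y(t) = S(t-d)\theta_2(t-d) + \mathcal{O}(\varepsilon^2)$ from \eqref{scareg}. This gives
\begequ
\dot{\tilde\theta}_2 = -\gamma S^2(t)\tilde\theta_2 + \gamma S(t)\big[(S(t-d)-S(t))\theta_2(t-d)\big] + \gamma S(t)\mathcal{O}(\varepsilon^2) - \dot\theta_2(t-d).
\endequ
Since $d=\varepsilon$ and $S(t)=S_0(t/\varepsilon)$, we have $S(t-d)-S(t) = S_0((t-\varepsilon)/\varepsilon)-S_0(t/\varepsilon) = \mathcal{O}(1)$ in general — this is the one subtlety — so I would instead keep $Y$ written directly in terms of $\theta_2(t)$ (not its delay) by absorbing the delay-induced mismatch, using $\theta_2(t-d)=\theta_2(t) + \mathcal{O}(\varepsilon c_v)$ (from Assumption \ref{ass1}, since $d=\varepsilon$), so that $Y(t) = S(t-d)\theta_2(t) + \mathcal{O}(\varepsilon)$ where now the $\mathcal{O}(\varepsilon)$ lumps together $\mathcal{O}(\varepsilon^2)$ and $\mathcal{O}(\varepsilon c_v)$ terms. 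Redefining the error relative to $\theta_2(t)$ I obtain
\begequ
\dot{\tilde\theta}_2 = -\gamma S(t) S(t-d) \tilde\theta_2 + \gamma S(t)\mathcal{O}(\varepsilon) - \dot\theta_2(t),
\endequ
and I would argue that for $d=\varepsilon$ the product $S(t)S(t-d)$ is, after averaging over the fast period, close to $S^2$; more directly, I would work with the un-delayed form by noting the delay operator can equally be centered so that the effective regressor satisfies a richness bound of the same type as \eqref{sispe}.

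Second, I would invoke the standard converse-Lyapunov / exponential-stability result for scalar LTV systems $\dot{e}=-\gamma\rho(t)e$ with $\rho$ satisfying $\int_t^{t+T}\rho(\tau)d\tau \ge \delta>0$ (which follows from \eqref{sispe} with $T=1/\varepsilon$ and $\delta=\delta_0$): the transition matrix obeys $|\Phi(t,s)| \le \kappa e^{-\lambda(t-s)}$ with a rate $\lambda$ proportional to $\gamma\delta_0/T = \gamma\delta_0\varepsilon$. Here the gain condition $\gamma > \gamma_\star/\varepsilon$ is exactly what makes the contraction rate $\lambda \gtrsim \gamma_\star\delta_0$ independent of $\varepsilon$ — this is the key reason the proposition can claim an $\varepsilon$-uniform (indeed global) result. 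Using variation of constants,
\begequ
|\tilde\theta_2(t)| \le \kappa e^{-\lambda t}|\tilde\theta_2(0)| + \int_0^t \kappa e^{-\lambda(t-\tau)}\big(\gamma |S(\tau)|\,\mathcal{O}(\varepsilon) + |\dot\theta_2(\tau)|\big)d\tau.
\endequ
The disturbance term $\dot\theta_2 = \varepsilon\dot y_v$ is bounded by $\varepsilon c_v$ via Assumption \ref{ass1}, contributing $\le \kappa\varepsilon c_v/\lambda = \mathcal{O}(\varepsilon)$ to the steady-state bound. The term $\gamma|S|\mathcal{O}(\varepsilon)$ is where one must be careful: $\gamma = \mathcal{O}(1/\varepsilon)$ and $\mathcal{O}(\varepsilon)$ multiply to $\mathcal{O}(1)$, which looks dangerous, but the residual in \eqref{scareg} is actually $\mathcal{O}(\varepsilon^2)$ (the $\mathcal{O}(\varepsilon c_v)$ piece from the delay is separately handled, and in fact also yields $\gamma\cdot\varepsilon c_v \cdot (1/\lambda)$-type bounds that survive), so $\gamma\cdot\mathcal{O}(\varepsilon^2) = \mathcal{O}(\varepsilon)$, and divided by $\lambda=\mathcal{O}(1)$ it stays $\mathcal{O}(\varepsilon)$. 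Hence $\limsup_{t\to\infty}|\tilde\theta_2(t)| \le \mathcal{O}(\varepsilon)$ exponentially.

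Finally, translating back: $\hat y_v - y_v = \frac1\varepsilon\hat\theta_2 - y_v = \frac1\varepsilon(\hat\theta_2 - \theta_2) = \frac1\varepsilon\tilde\theta_2$, so the $\mathcal{O}(\varepsilon)$ bound on $\tilde\theta_2$ would give only an $\mathcal{O}(1)$ bound on $\hat y_v - y_v$ — so the accounting above must be sharpened by one order: the steady-state bound on $\tilde\theta_2$ must be shown to be $\mathcal{O}(\varepsilon^2)$, which requires that \emph{both} the residual in \eqref{scareg} and the slow-drift term contribute at order $\varepsilon^2$ after division by the $\mathcal{O}(1)$ rate $\lambda$. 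For the residual this is immediate ($\mathcal{O}(\varepsilon^2)$ in \eqref{scareg}, times $\gamma=\mathcal{O}(1/\varepsilon)$, over $\lambda=\mathcal{O}(1)$, gives $\mathcal{O}(\varepsilon)$... ), so in fact one sees the claim \eqref{tilyv} is tight exactly at $\mathcal{O}(\varepsilon)$ and the bookkeeping is: $\tilde\theta_2$ settles to $\mathcal{O}(\varepsilon^2)$ \emph{only} because $\gamma$ appears \emph{multiplying} the regressor in both the stabilizing and the forcing terms, so the effective input-to-state gain from the $\mathcal{O}(\varepsilon^2)$ disturbance is $\mathcal{O}(1)$-independent-of-$\gamma$ (a DREM-type gradient estimator has ISS gain $\propto 1/(\gamma\delta)$ from additive output noise of size $n$, scaled by $\|S\|_\infty$, giving an error $\propto \|S\|_\infty n/\delta$ that is $\gamma$-free), hence error $\propto \mathcal{O}(\varepsilon^2)/\delta_0 = \mathcal{O}(\varepsilon^2)$, and dividing by $\varepsilon$ yields $\mathcal{O}(\varepsilon)$ for $\hat y_v - y_v$. \textbf{The main obstacle} is precisely this delicate order-counting with $\gamma$ and $\varepsilon$ intertwined — making rigorous that the $\gamma$-boosted gain does not amplify the $\mathcal{O}(\varepsilon^2)$ and slow-drift disturbances beyond $\mathcal{O}(\varepsilon^2)$ in $\tilde\theta_2$ — together with justifying that the delayed regressor $S(\cdot-d)$ with $d=\varepsilon$ inherits a richness bound of the form \eqref{sispe} with an $\varepsilon$-independent $\delta_0$, so that the contraction rate after the $\gamma>\gamma_\star/\varepsilon$ scaling is genuinely $\mathcal{O}(1)$. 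I expect the clean way to organize this is to set $e := \tilde\theta_2$, use the Lyapunov function $V=\frac12 e^2$, and show $\dot V \le -2\gamma\delta_0\varepsilon\,c_1 V + \gamma\varepsilon\,c_2\sqrt{V}\,\mathcal{O}(\varepsilon) + c_3\varepsilon c_v\sqrt{V}$ over each fast window, from which the comparison lemma gives the ultimate bound $\sqrt{V} \le \mathcal{O}(\varepsilon^2/\delta_0)$, i.e. $|\tilde\theta_2| = \mathcal{O}(\varepsilon^2)$ and thus \eqref{tilyv}.
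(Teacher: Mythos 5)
Your overall strategy (gradient error dynamics, exponential stability from the richness of $S$, perturbation/ISS bound) is the same as the paper's, but two concrete gaps remain. First, the delayed-regressor ``subtlety'' you flag has an exact, one-line resolution that you miss: since $s$ is 1-periodic with zero mean, $S_0$ in \eqref{S} is itself 1-periodic, so with $d=\varepsilon$ one has $S(t-d)=S_0(t/\varepsilon-1)=S_0(t/\varepsilon)=S(t)$ \emph{identically}. The paper uses exactly this to rewrite \eqref{scareg} as $Y(t)=S(t)\,\theta_2(t-d)+\mathcal{O}(\varepsilon^2)$, so the error dynamics has the sign-definite factor $S^2(t)$ from the start. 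Your substitute arguments --- that $S(t)S(t-d)$ is ``after averaging close to $S^2$'' or that the delay ``can be centered'' --- are not proofs and are unnecessary. Relatedly, your sizing of the delay mismatch is off by one order: $\dot\theta_2=\varepsilon\dot y_v$ and $d=\varepsilon$ give $\theta_2(t-d)-\theta_2(t)=\mathcal{O}(\varepsilon^2)$ under Assumption \ref{ass1}, not $\mathcal{O}(\varepsilon c_v)$; with your $\mathcal{O}(\varepsilon)$ lump in $Y$, the $\gamma$-multiplied forcing term would destroy the claimed accuracy, since that term must be $\mathcal{O}(\varepsilon^2)$ for the final bound to survive the division by $\varepsilon$.

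Second, your quantitative accounting does not close. You take the contraction rate $\lambda\propto\gamma\delta_0\varepsilon$ from \eqref{sispe} over the window $1/\varepsilon$, and your final Lyapunov/comparison inequality $\dot V\le-2\gamma\delta_0\varepsilon c_1V+\gamma\varepsilon c_2\sqrt V\,\mathcal{O}(\varepsilon)+c_3\varepsilon c_v\sqrt V$ then yields an ultimate bound $\sqrt V\lesssim\big(\gamma\,\mathcal{O}(\varepsilon^2)+\varepsilon c_v\big)/(\gamma\delta_0\varepsilon)=\mathcal{O}(\varepsilon)/\delta_0$, i.e.\ $|\tilde\theta_2|=\mathcal{O}(\varepsilon)$ and only $|\hat y_v-y_v|=\mathcal{O}(1)$ --- not the $\mathcal{O}(\varepsilon^2/\delta_0)$ you assert, and not \eqref{tilyv}. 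Your ``$\gamma$-free ISS gain'' intuition is the right mechanism, but it requires a decay rate proportional to $\gamma$ itself, which follows from the fact that the excitation of $S^2$ accrues every period, $\int_t^{t+\varepsilon}S^2(\tau)d\tau=\varepsilon\int_0^1S_0^2(\sigma)d\sigma$, i.e.\ at a constant rate per unit time --- a strictly stronger statement than \eqref{sispe} over the $1/\varepsilon$ window, and one you never establish or reconcile with your earlier rate estimate. The paper implements this bookkeeping cleanly by the time-scale change $\tau=\gamma t$: in the $\tau$ scale the unperturbed system $d\tilde\theta_2/d\tau=-S^2(\tau/\gamma)\tilde\theta_2$ decays at an $\varepsilon$-independent rate, while both perturbations ($\Delta_1$, of size $\mathcal{O}(\varepsilon^2)$ because the $\gamma$ is absorbed by the rescaling, and $\tfrac{1}{\gamma}\Delta_2\le\tfrac{c_v}{\gamma_\star}\varepsilon^2$ thanks to $\gamma>\gamma_\star/\varepsilon$) are $\mathcal{O}(\varepsilon^2)$, so $\tilde\theta_2$ settles to $\mathcal{O}(\varepsilon^2)$ and division by $\varepsilon$ gives \eqref{tilyv}. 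As written, your proposal reaches only an $\mathcal{O}(1)$ bound on $\hat y_v-y_v$ unless both of these points are repaired.
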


\begin{proof}
\rm
Define the error signal
$$
\Tilde{\theta}_2:= \Hat{\theta}_2 - \varepsilon y_v .
$$
Notice that---because of the periodicity of $S$---with the choice $d=\varepsilon$, \eqref{scareg} is equivalent to
$$
Y(t) = S(t) \theta_2(t-d) + \mathcal{O}(\varepsilon^2).
$$
Replacing the latter in \eqref{vir_out_filter}, and invoking Lemma \ref{lem1} and Assumption \ref{ass1},  we get
\begequ
\label{tilde_theta2}
\dot{\tilde{\theta}}_2= - \gamma S(t)[S(t)(\hat{\theta}_2(t)-\theta_2(t-d)) + \mathcal{O}(\varepsilon^2)]  + \varepsilon \dot{y}_v.
\endequ
For the second term on the right-hand side of \eqref{tilde_theta2}, noting the tininess of $\varepsilon>0$ and using the Taylor expansion element-by-element we have
$$
\begin{aligned}
\theta_2(t-d) & = \theta_2(t) -  \dot{\theta}_2(t) d + \mathcal{O}(\varepsilon^2) \\
& = \theta_2(t) {- \dot{y}_v(t)} \varepsilon^2 + \mathcal{O}(\varepsilon^2).
\end{aligned}
$$
For any $c_v$ we can always find $\varepsilon \in (0,1)$ to guarantee $| \dot{y}_v(t)|\varepsilon^2 \le c_v \varepsilon^2= \mathcal{O}(\varepsilon^2)$.
The equation \eqref{tilde_theta2} becomes
\begequ
\label{dyn_theta}
\dot{\tilde{\theta}}_2= - \gamma S^2(t)\tilde{\theta}_2 + \gamma \Delta_1 + \Delta_2.
\endequ
with
$$
\begin{aligned}
\Delta_1& := -S(t)^2\dot{y}_v \varepsilon^2+ \big(S(t)^2 + S(t)\big) \mathcal{O}(\varepsilon^2) \\
\Delta_2& := \varepsilon \dot{y}_v
\end{aligned}
$$
satisfying $||\Delta_1||_\infty \le c_\ell \varepsilon^2$ for some $c_\ell >0$. We define a new time scale $\tau$ as ${d\tau \over d t} = \gamma$, in which the error dynamics \eqref{dyn_theta} becomes
\begequ
\label{error_tau}
{d \tilde{\theta}_2 \over d\tau} = - S^2 \Big({1 \over \gamma}\tau \Big) \tilde{\theta}_2 + \Delta_1 + {1\over \gamma} \Delta_2.
\endequ
We notice that the last term satisfies 
$$
\left\| {1\over \gamma} \Delta_2\right\|_\infty \le {1\over \gamma_\star} c_v\varepsilon^2
$$ 
due to $\gamma > {\gamma_\star \over \varepsilon}$ with $\gamma_\star$ independent of $\varepsilon$.

{Recalling \eqref{sispe}, the origin of the unperturbed part of \eqref{error_tau}, {\em i.e.,} ${d {\tilde{\theta}}_2 \over d\tau}= - S^2({\tau \over \gamma})\tilde{\theta}_2$, is exponentially stable.} Using the converse Lyapunov theorem {\cite[Theorem 4.14]{KHA}}, and carrying-out some basic perturbation analysis {\cite[Lemma 9.2]{KHA}}, we complete the proof.
\qed
\end{proof}
\subsection{Discussion}
\label{subsec32}
The following remarks are in order.

\noindent {\bf R1} The virtual output filters in \cite{COMetalacc,YIetalscl} compute estimates by averaging in a (finite) moving horizon $[t-n\varepsilon,t]$ the observation error. The new filter \eqref{vir_out_filter} provides an alternative closed-loop approach, {which is similar to defining a moving average in infinite-time interval. We present some comparisons among these designs by simulations in Section \ref{sec5}.}

\noindent {\bf R2}   Consider the output with measurement noise, that is,
\begequ
\label{y_eq}
y = \bar{y} + \varepsilon S y_v + \mathcal{O}(\varepsilon^2) + \xi,
\endequ
where $\xi$ represents high-frequency measurement noise. Increasing $\varepsilon$ or the norm of $\mathfrak{b}$ can increase the signal-to-noise ratio, but at the price of degrading estimation accuracy. {Similar remarks were made in \cite{COMetalacc,PETetal}, carrying out power spectral density analysis in the stochastic framework and} sensitivity analysis in frequency domain, respectively. In \cite{YIetaltie} this tradeoff has been observed in some experimental evidence on motors, which is also the case for the experiments on the 1-dof MagLev system presented in Section \ref{sec5}. Also, notice that we can re-write the estimator \eqref{vir_out_filter} as
$$
\dot{\hat{y}}_v = \gamma' S(t) [Y(t) - \varepsilon S(t) \hat{y}_v],
$$
where the \emph{tuning gain} is $\gamma':= {\gamma \over \varepsilon}$ without affecting the analysis, hence avoiding the division by the small parameter $\varepsilon$ in the computation of $\hat y_v$.

\noindent {\bf R3} In Proposition \ref{prop1}, we require that $\gamma >{1\over \varepsilon} \gamma_\star$ for some $\gamma_\star>0$ independent of $\varepsilon$. For such a case, the parameter $\gamma$ has very limited effects on the ultimate accuracy in the presence of measurement noises, but only assigns the convergence speed. To show this, we write the filtered signal, via the filter $(\cald_d - \calz_{2d})[\cdot]$, of $\xi$ in \eqref{y_eq} as $\xi_f$, yielding $Y(t) = S(t-d) \theta_2(t-d) + \xi_f + \mathcal{O}(\varepsilon^2)$. Due to the BIBO property of $\cald_d$ and $\calz_{2d}$, the vector $\xi_f$ is also bounded. Hence, the error dynamics \eqref{dyn_theta} reads
\begequ
\label{error_dyn_noise}
{d {\tilde{\theta}}_2 \over d\tau} = -  S^2\Big({1\over \gamma}\tau\Big)\tilde{\theta}_2 +  S\Big({1\over \gamma}\tau\Big) \xi_f +
\Delta_1 + {1\over \gamma} \Delta_2 .
\endequ
Following the proof in \cite[Proposition 1]{ORTYI}, we are able to construct a strict Lyapunov function $V(\cdot)$ for the dynamics $\dot{\tilde{\theta}}_2= -  S^2(\cdot)\tilde{\theta}_2$, and then calculate its derivative  in the $\tau$ time scale. It implies that $\tilde{\theta}_2$, equivalently $\varepsilon \hat{y}_v$, converges to the invariant set
$$
\Omega:=\Big\{ \textbf{x} \in \rea^m ~\Big|~ |\textbf{x}| \le  k_1 \|S(t) \xi_f\|_\infty + k_2 \varepsilon^2 + {1\over \gamma_\star} k_3 \varepsilon^2  \Big\}
$$
with $k_i>0~(i=1,2,3)$ independent of $\varepsilon$. It is clear that the accuracy, determined by the set $\Omega$, hardly changes with different $\gamma > {\gamma_\star \over \varepsilon}$.

\noindent {\bf R4} Using the Laplace transform, the relationship \eqref{Y} may be represented in the frequency domain as
$
Y(s)=G_d(s) y(s),
$
where we defined the transfer function
$$
 G_d(s) := e^{-ds} +\dfrac{1}{2ds}\left( e^{-2ds} -  1\right).
$$
It is shown in \cite{YIetalccta,YIetaltie} that, for small $d>0$, this transfer function is a high-pass filter---with respect to the frequency content of the signals of interest. This provides the connection between the proposed filter and the classical filtering techniques widely used in applications \cite{HOLtie,NAM}.

\noindent {\bf R5} A block diagram realization of the proposed filter is given in Fig. \ref{fig1}, where we defined the LTV operator
$
\hat y_v=\hgrad[Y](t)
$
to represent \eqref{vir_out_filter}. It is clear, then, that the computational burden required for the practical implementation of the proposed filter is negligible.

\begin{figure}[htp!]
    \centering
    \includegraphics[width=12cm]{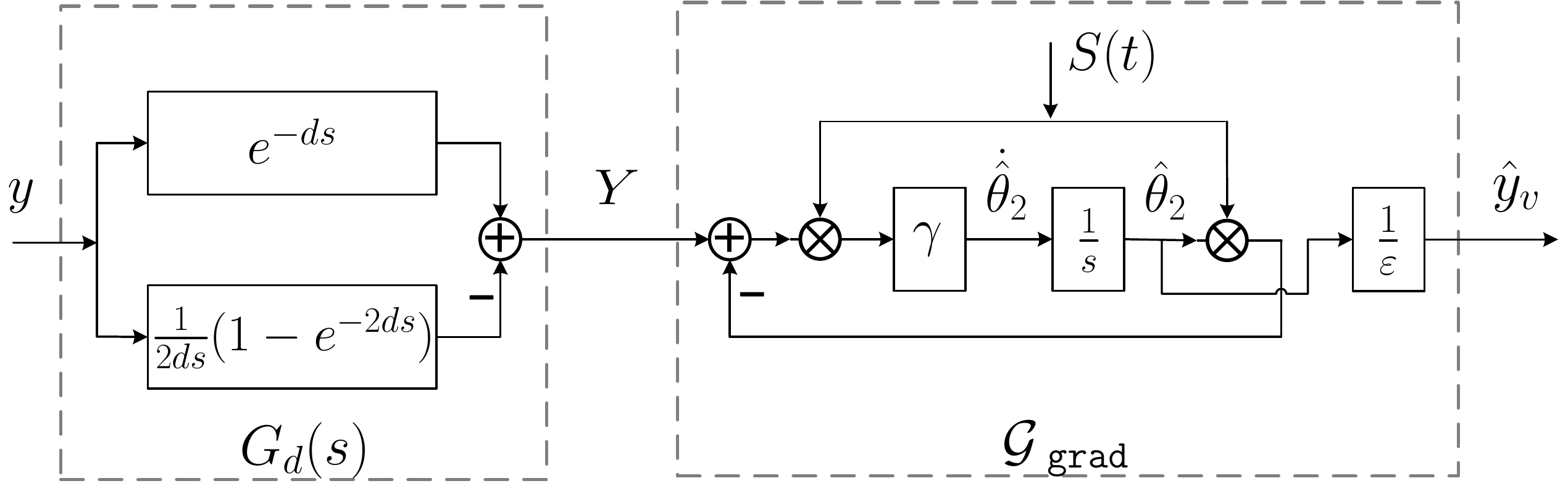}
    \caption{Block diagram of the proposed estimation method}
    \label{fig1}
\end{figure}

%
\section{An Observer of Fluxes and Charges in EMS}
\lab{sec4}
%
In this section we illustrate the application of the new filter to a general class of EMS and use the estimated virtual output to design an observer of its {\em fluxes and charges}, that is the states of the energy storing capacitive and inductive elements, which are usually not available for measurement. Equipped with the latter it is often possible to design an  observer for the mechanical coordinates, which are essential for the implementation of sensorless control, {in which we assume that only currents and voltages, are measurable by sensors.} See \cite{PYRetal} where flux observers, without signal injection, for EMS---with magnetic energy only---are proposed, and \cite{BOBetal} where full state observers, also without signal injection, for MagLev systems are developed.

\subsection{Model of the system and virtual outputs}
\lab{subsec41}
%
In this subsection we present the mathematical model of the EMS considered in the section. For the sake of brevity, the presentation is quite succinct, the interested reader is referred to \cite{MEIbook,STRDUI} for more details on modelling of general EMS and to \cite{NAM,ORTbook} for the particular case of electric machines.

We consider multiport, electromechanical systems with magnetic and electric fields consisting of $n_{L}$ magnetic ports, $n_C$ electric ports  and $n_M$ mechanical ports, as defined in \cite{MEIbook,STRDUI}. The port variables of the magnetic and electric ports are voltages and currents, denoted as $(v_L,i_L)\in \rea^{n_L}\times \rea^{n_L}$ and  $(v_C,i_C)\in \rea^{n_C}\times \rea^{n_C}$, respectively. The mechanical port variables are  $(F_E,\dot q)\in \rea^{n_M}\times \rea^{n_M}$, where $F_E$ are the mechanical forces of electrical origin and  $\dot q$ the (rotational or translational) velocities of the movable mechanical elements.

There are external electrical sources, through which electrical energy is supplied to the magnetic and electric elements. To simplify the notation, and without loss of generality, we assume that the electrical subsystem is ``fully actuated''\footnote{{The dimension of the electrical coordinate is equal to the one of the input. The terminology ``full actuation" is borrowed from the literature on mechanical systems.}}, in the sense that there are $n_L$ voltage sources $v_L$, and $n_C$ current sources $i_C$---{see Remark {\bf R6} below}. There are also (electrical and mechanical) dissipation elements---which are assumed linear. The energy stored in the system consists of three components: magnetic energy stored in the inductances, electrical energy stored in the capacitors and mechanical energy stored by the movable part inertia. They are defined by the functions $H_L(\lambda,q)$, where $\lambda \in \rea^{n_L}$ is the vector of flux linkages, $H_C(Q,q)$, where $Q \in \rea^{n_C}$ is the vector of electrical charges and $H_M(q,p)$, where $p \in \rea^{n_M}$ is the generalized momenta, respectively. The systems total energy function is given by
$$
H(Q,\lambda,q,p):= H_C(Q,q)+H_L(\lambda,q)+H_M(q,p).
$$
The constitutive relations of the elements are
$$
\begin{aligned}
v_C & =  \nabla_Q H_C(Q,q)\\
i_L & =  \nabla_\lambda H_L(\lambda,q)\\
F_E & =  - \nabla_q [H_L(\lambda,q)+H_C(Q,q)]\\
\dot q & =  \nabla_p H_M(q,p),
\end{aligned}
$$
where the minus sign in $F_E$ reflects Newton's third law.

The equations of motion of the system can be described in port-Hamiltonian (pH) form \cite{VAN} as
\begali{
    \dot{x} & = \calf \nabla H + g u \\
    y  &=g^\top  \nabla H
\label{pH-General}
}
with the state $x:=\col(Q,\lambda,p,q)$, input and output signals
\beeq{
\lab{xuy}
u:= \begin{bmatrix} i_C \\  v_L \end{bmatrix}, ~~
y:=\begmat{R_C^{-1}  \nabla_Q H_C(Q,q) \\ \nabla_\lambda H_L(\lambda,q)}
}
and the constant, interconnection damping and input matrices
$$
\calf:=
\begin{bmatrix}
 - R_C^{-1} & \mathbf{0} & \mathbf{0} & \mathbf{0}\\
 \mathbf{0} & - R_L & \mathbf{0} & \mathbf{0} \\
 \mathbf{0} & \mathbf{0} & \mathbf{0} & I \\
 \mathbf{0} & \mathbf{0} & -I & - R_M
\end{bmatrix},\; g:=
\begin{bmatrix}
R_C^{-1} & \mathbf{0}\\
\mathbf{0} &   I\\
\mathbf{0}& \mathbf{0} \\
\mathbf{0}& \mathbf{0}
\end{bmatrix},\;
$$
where $R_L$ and $R_C$ are {positive definite}, dissipation matrices, and {$R_M$ is a positive semidefinite mechanical dissipation matrix}.  The class of electromechanical systems described by \eqref{pH-General} is quite large.

Notice that we have adopted as system output the so-called ``natural output" \cite{VAN}. Given this definition of output signal the virtual output \eqref{yv} is given as
\beeq{
\lab{yv0}
y_v=\begmat{R_C^{-1} \nabla^2_Q H_C(Q,q) R_C^{-1}\frak{b}_{CE}  \\ \nabla^2_\lambda H_L(\lambda,q)\frak{b}_{LE}}=:\begmat{y_{vC} \\ y_{vL}},
}
where, for ease of future reference, we introduced $\mathfrak{b}=\col( \frak{b}_{CE} , \frak{b}_{LE})$, with $\frak{b}_{CE} \in \rea^{n_C}$ and $\frak{b}_{LE} \in \rea^{n_L}$ free, scaling vectors and we partitioned $y_v$.

\subsection{Observer for electrical coordinates using $y_v$}
\lab{subsec42}
%
In this subsection, we design an observer for the electrical state $x_E:=\col(Q,\lambda)$ with the help of the virtual output $y_v$, whose estimate is obtained with the filter \eqref{vir_out_filter}.

A key observation in the solution of the problem is that the derivative of $x_E$ is {\em known}. Indeed, from \eqref{pH-General} we have that
\begequ
\lab{dotxe}
\dot x_E =  - R_E y + g_E u,
\endequ
where we defined $R_E:= \diag(R_C^{-1}, R_L)$ and $g_E:= \diag(R_C^{-1} , I) $. {This stems from the fact that this derivative equals the voltage applied to inductors and the currents of capacitors that are the external sources to the system, hence measurable. This fact is a key property to design gradient-based observers \cite{ORTetalcst,ORTYI}, as well as parameter estimation-based observers \cite{ORTetalscl}.} In order to derive our observer we make the additional assumption that the electrical energy terms are quadratic functions of the form
\begali{
H_C(Q,q)  =  \hal Q^\top  C^{-1}(q) Q , \quad
H_L(\lambda,q)  =  \hal \lambda^\top  L^{-1}(q)\lambda,
\lab{enefun}
}
where $L(q) \in \rea^{n_L \times n_L}$ and $C(q) \in \rea^{n_C \times n_C}$ are the positive-definite, inductance and capacitance matrices, respectively.

For this class of energy functions,  the natural output \eqref{xuy} and virtual output \eqref{yv0} take the simpler form
\begali{
y  =\begmat{R_C^{-1} C^{-1}(q)Q  \\  L^{-1}(q)\lambda}, \quad
y_v  =\begmat{R_C^{-1} C^{-1}(q)R_C^{-1}\frak{b}_{CE}  \\  L^{-1}(q)\frak{b}_{LE}},
\lab{yv1}
}
with the partition $y_v:= \col(y_{vC}, y_{vL})$. To streamline the presentation of the observer we define the matrices
\begali{
\nonumber
Y_v &:= \diag( y^\top_{v_C}R_C , y^\top_{v_L}),\; \hat Y_v  := \diag( \hat y^\top_{v_C}R_C , \hat y^\top_{v_L} )\\
\calb_E & := \diag( \frak{b}^\top_{CE} , \frak{b}^\top_{LE} ),
\label{Yvbeta}
}
where $\hat y_v$ is obtained with the virtual output filter \eqref{vir_out_filter}.
\begin{proposition}
\lab{pro4}
\rm
Consider the EMS \eqref{pH-General} with energy function \eqref{enefun}, input signal \eqref{eq1}, bounded state trajectories and the natural and virtual outputs \eqref{yv1}.  Define the observer
\begequ
\label{obs-mutual}
\dot{\hat{x}}_E =  - R_E y + g_E u+ \gamma \hat {Y}^\top_v \big( \calb_E y - \hat {Y}_v \Hat{x}_E \big),
\endequ
where $\hat {Y}_v$ and $\calb_E$ are given in \eqref{Yvbeta} and $\gamma>0$ is a tuning gain. If $Y_v$ is PE, we then have
$$
\lim_{t\to\infty} |\Hat{x}_E(t) - x_E(t)| = \mathcal{O}(\varepsilon).
$$
\hfill{$\triangleleft$}
\end{proposition}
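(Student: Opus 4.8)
The plan is to recognize that, for the quadratic energy functions \eqref{enefun}, the natural and virtual outputs \eqref{yv1} are tied to the unknown electrical state $x_E=\col(Q,\lambda)$ by an \emph{algebraic} linear regression, and that the observer \eqref{obs-mutual} is exactly a gradient estimator for this regression, driven by the estimate $\hat Y_v$ delivered by the virtual output filter \eqref{vir_out_filter}. First I would establish the key identity $Y_v x_E = \calb_E y$, with $Y_v$ and $\calb_E$ as in \eqref{Yvbeta}. This follows by direct inspection of \eqref{yv1} together with the symmetry of $R_C$, $C^{-1}(q)$ and $L^{-1}(q)$: the first block gives $y_{vC}^\top R_C Q = \frak{b}_{CE}^\top R_C^{-1} C^{-1}(q) Q = \frak{b}_{CE}^\top y_C$, and the second gives $y_{vL}^\top\lambda = \frak{b}_{LE}^\top L^{-1}(q)\lambda = \frak{b}_{LE}^\top y_L$. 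Hence $x_E$ is the unknown parameter of a regression whose (block-diagonal) regressor $Y_v$ is a smooth function of the bounded state and whose right-hand side $\calb_E y$ is measurable.

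Next I would derive the observer error dynamics. Since the right-hand side of \eqref{dotxe} is built only from measured signals, subtracting \eqref{dotxe} from \eqref{obs-mutual}, writing $\tilde x_E:=\hat x_E - x_E$ and $\tilde Y_v:=\hat Y_v - Y_v$, and using the identity above, yields
$$
\dot{\tilde x}_E = -\gamma\,Y_v^\top Y_v\,\tilde x_E - \gamma\big(Y_v^\top\tilde Y_v + \tilde Y_v^\top Y_v + \tilde Y_v^\top\tilde Y_v\big)\tilde x_E - \gamma\,\hat Y_v^\top\tilde Y_v\, x_E .
$$
I would then argue that the virtual output filter is applicable in this setting: the trajectories are bounded by hypothesis and, from \eqref{yv1}, $y_v$ is a smooth function of $q$ with $\dot q$ bounded, so Assumption \ref{ass1} holds with a $c_v$ independent of $\varepsilon$; Proposition \ref{pro1} then gives $\|\tilde Y_v(t)\|\to\mathcal{O}(\varepsilon)$ exponentially and $\hat Y_v$ bounded. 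Consequently the last two terms above form a perturbation that is bounded, with a $\tilde x_E$-proportional part of gain $\mathcal{O}(\varepsilon)$ after the transient and an additive part of size $\mathcal{O}(\varepsilon)$ — here $\gamma>0$ is a \emph{fixed} tuning gain, independent of $\varepsilon$, so these $\mathcal{O}(\varepsilon)$ estimates are genuine.

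Finally, under the PE assumption on $Y_v$ the nominal system $\dot{\tilde x}_E = -\gamma Y_v^\top Y_v\tilde x_E$ is uniformly globally exponentially stable, by the classical argument for LTV gradient flows with persistently exciting regressor (since $Y_v$ is block diagonal, the hypothesis just amounts to PE of $y_{vC}$ and of $y_{vL}$ separately). Constructing a strict Lyapunov function from the converse theorem \cite[Theorem 4.14]{KHA} and invoking the perturbation estimate \cite[Lemma 9.2]{KHA}, exactly as in the proof of Proposition \ref{pro1}, one concludes that for $\varepsilon$ small enough $\tilde x_E$ converges exponentially to a ball of radius $\mathcal{O}(\varepsilon)$, that is $\lim_{t\to\infty}|\hat x_E(t)-x_E(t)| = \mathcal{O}(\varepsilon)$. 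The step I expect to be the main obstacle is precisely this perturbation/cascade argument: the filter error $\tilde Y_v$ enters \eqref{obs-mutual} both multiplicatively (through $\hat Y_v^\top\hat Y_v$) and additively (through $\hat Y_v^\top\tilde Y_v x_E$), so one must make sure its possibly large initial transient neither causes finite escape (there is none, since the perturbation has at most linear growth in $\tilde x_E$ with bounded coefficients) nor destroys the PE-induced contraction — because PE is robust to sufficiently small perturbations, for $\varepsilon$ small $\hat Y_v$ is PE as well. It is the exponential convergence granted by Proposition \ref{pro1}, together with the boundedness of all signals and the freedom to shrink $\varepsilon$, that makes the argument close.
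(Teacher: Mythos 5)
Your proposal is correct and follows essentially the same route as the paper: the algebraic regression $\calb_E y = Y_v x_E$, the gradient-observer error equation obtained by subtracting \eqref{dotxe}, and the conclusion via PE-induced exponential stability of $\dot{\tilde x}_E = -\gamma Y_v^\top Y_v \tilde x_E$ plus standard perturbation arguments using \eqref{tilyv}. The only difference is presentational: the paper lumps all $\tilde Y_v$-dependent terms into a single disturbance $\xi$ with $\lim_{t\to\infty}|\xi(t)|\le\mathcal{O}(\varepsilon)$, whereas you expand them explicitly and address the multiplicative part and the filter transient, which is a slightly more careful rendering of the same argument.
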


\begin{proof}
\rm
From \eqref{yv1}, \eqref{Yvbeta} and the definition of $x_E$ we get the {\em linear regression} in unknown $x_E$
\beeq{
\lab{regfor1}
\calb_Ey=Y_vx_E,
}
{where we have used
$$
\begin{aligned}
\mathcal{B}_E y  = \begin{bmatrix}\mathfrak{b}^\top_{CE} & 0 \\ 0 & \mathfrak{b}_{LE}^\top\end{bmatrix}
\begin{bmatrix} R_C^{-1} C^{-1}(q) Q \\ L^{-1}(q)\lambda \end{bmatrix}
 = \begin{bmatrix} \mathfrak{b}_{CE}^\top R_C^{-1} C^{-1}(q) Q \\ \mathfrak{b}_{LE}^\top L^{-1}(q)\lambda  \end{bmatrix},
\end{aligned}
$$
and
$$
Y_v x_E = \begmat{y_{vC}^\top R_C & 0 \\ 0 & y_{vL}^\top}  \begmat{Q \\ \lambda}
=
\begmat{(R_C^{-1} C^{-1}(q) R_C^{-1} \mathfrak{b} _{CE})^\top R_C Q\\
(L^{-1}(q) \mathfrak{b}_{CE})^\top \lambda
},
$$
as well as invoking that $R_C$ and $L(q)$ are positive definite.} Substituting \eqref{regfor1} in \eqref{obs-mutual}, and using \eqref{dotxe}, yields
$$
\dot{\tilde{x}}_E = - \gamma \hat {Y}^\top_v \big(Y_v x_E- \hat {Y}_v \Hat{x}_E \big),
$$
where we defined the observation error $\tilde{x}_E:=\hat{x}_E-x_E$. This equation can be written in the form
\begequ
\label{error-eqchi}
\dot{\tilde{x}}_E = - \gamma Y^\top_v Y_v \tilde{x}_E + \xi,
\endequ
where, invoking \eqref{tilyv} and boundedness of all signals, the disturbance term verifies
$
\lim_{t\to\infty}|\xi(t)| \leq \mathcal{O}(\varepsilon).
$
The proof is completed recalling that, under the PE assumption, the unperturbed error equation \eqref{error-eqchi} is exponentially stable \cite{SASBOD} and using standard perturbation arguments.
\qed
\end{proof}
\subsection{Discussion}
\label{subsec43}
%
The following remarks are in order.

\noindent {\bf R6} The class of systems for which the observer of Proposition \ref{pro4} is applicable, can be extended in several directions. First, the assumption of ``fully actuated" electrical coordinates was introduced only to simplify the notation. In the ``underactuated" case, {{\em i.e.}, $\text{rank}(g) < \dim(x_E)$}, two additional, input selecting, tall, constant matrices appear in the definition of the input matrix $g$, they can be ``removed" with a suitable selection of the scaling vector $\frak{b}$, without affecting the results. Second, with some additional calculations it is possible to consider magnetic energy functions of the form
$$
H_L(\lambda,q)  =  \hal [\lambda - \mu(q)]^\top  L^{-1}(q)  [\lambda - \mu(q)],
$$
where $\mu(q)$ represents the flux linkages due to permanent magnets. {For the cases considering magnetic saturation or mutual capacitance/inductance, it is possible to construct a \emph{nonlinear regressor} on $x_E$ instead of \eqref{regfor1}, thus a locally convergent observer would be obtained.}

\noindent {\bf R7} We have worked out the details of an even more general case, namely the EMS shown in Fig. \ref{fig:elemag}. This system is studied in \cite{MEIbook} [Exercise 3-14, pp. 146], where it is assumed that the capacitance and inductance depend, not only on the mechanical position $q \in \rea^2$, but in the capacitor voltage and the inductor current. Hence, the electrical energy functions $H_C(q_1,Q)$ and $H_L(q_2,\lambda)$ are of the form
$$
H_C = \int_{0}^{Q} \dot{\lambda}(Q',q_1) dQ' , \;
H_L= \int_{0}^{\lambda} \dot{Q}(\lambda', q_2) d \lambda'.
$$
with
$$
Q= {c_1 \over c_2 + c_3 q_1}\dot{\lambda}^3, \; \lambda =  (l_1 + l_2 q_2^2)\dot{Q}^3.
$$
Unfortunately, in this case the regression form \eqref{regfor1} is {\em nonlinear} in $x_E$---a case that can still be handled with the proposed method.

\begin{figure}[h]
  \centering
  \includegraphics[width=9cm]{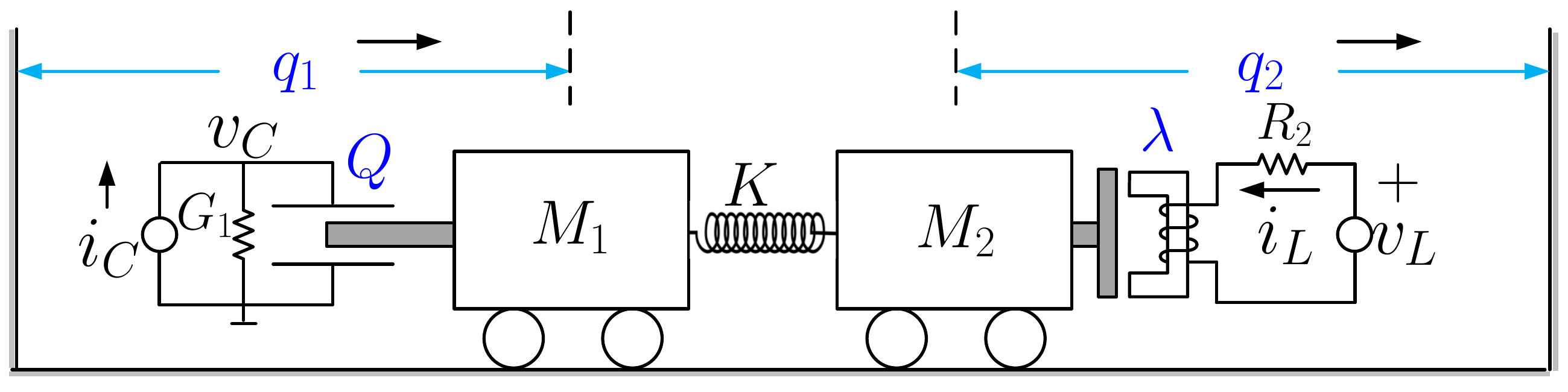}
  \caption{An electromagnetic-field device}\label{fig:elemag}
\end{figure}

\noindent {\bf R8} As shown in the derivations above, the mechanical dynamics plays no role in the solution of the problem of observation of the electrical coordinates. Of course, it is essential to reconstruct $q$ and $p$ from the electrical coordinates---this task is carried out, for completeness, in the examples of the next section. See also  \cite{BOBetal,ORTYI,PYRetal}.

\section{Examples}
\lab{sec5}
%
In this section we apply the result of Proposition \ref{pro4} to two physical examples. For the sake of completeness we also give observers for the mechanical coordinates. 

\subsection{Micro electromechanical optical switch}
\lab{subsec51}
%
The first example is the micro electromechanical optical switch system \cite{BORetal}. This system has only electric-field energy, and its dynamics is described by \eqref{pH-General} with $n_L=0$, $n_C=n_M=1$, that is,
$$
\begmat{\dot{Q} \\  \dot{q} \\ \dot{p}}
 =
 \begmat{- {1\over R_C} & 0 & 0 \\ 0 & 0 & 1 \\ 0 & -1 & -R_M} \nabla H
+ \begmat{{1\over R_C} \\ 0 \\ 0} u,
$$
and the total energy function
$$
H(Q,q,p) = {1\over 2m}p^2 + {a_1\over2}q^2 + {a_2\over4}q^4 +{1\over 2C(q)}Q^2,
$$
where $m>0$ is the mass of the actuator, $a_1,a_2>0$ are the spring constants, $C(q)={c_1(q+c_0)}$ and $c_0,c_1>0$ are the capacitance parameters. A physical constraint is $q>0$. The output is the voltage in the capacitor
$
y=v_C = {1\over c_1(q+c_0)}Q,
$
and the virtual output, with $\frak{b}=1$, is $y_v ={1\over R_Cc_1 (q+c_0)}$. Note that from the virtual output we can directly recover $q$. The linear regressor \eqref{regfor1} takes, then, the form
\beeq{
\lab{regfor2}
{1\over R_C} y = y_v Q,
}
with the state $Q$ to be estimated. A full state observer design is given below, and we assume that a simple projection operator of $\hat{y}_v$ has been adopted, but omitted for brevity, to guarantee $\hat{y}_v \neq 0$.
\begin{proposition}
\label{pro5}\em
%
For the micro electromechanical optical switch model, the observer
\begequ
\label{obs9}
\begin{aligned}
    \dot{\Hat{Q}} & = - {1\over R_C}y + {1\over R_C}u + \gamma \hat{y}_v \bigg( {1\over R_C} y - \hat{y}_v \Hat{Q}\bigg) \\
    \Hat{q} &= {1\over R_Cc_1\cdot \hat{y}_v} - c_0\\
    \dot{\Hat{p}} & = -a_1\hat{q} - a_2\hat{q}^3 + {1 \over 2c_1(\hat{q}+c_0)^2} \hat{Q}^2 - {R_M\over m}\hat{p}.
\end{aligned}
\endequ
with $\gamma >0$ guarantees
$$
\lim_{t\to\infty} \left|
\col(
\tilde{Q} , \tilde {q}, \tilde{p})
\right| =\mathcal{O}(\varepsilon), \; \text{(exp.)}
$$
with the observation errors
$
\tilde{Q}:= \Hat{Q} - Q,\; \tilde{q}: = \Hat{q} - q,\;\tilde{p}: = \Hat{p} - p.
$
\hfill{$\triangleleft$}
\end{proposition}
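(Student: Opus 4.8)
The plan is to establish the result in three cascaded stages, exploiting the triangular structure of the observer \eqref{obs9}: first the charge estimate $\hat Q$, then the algebraic reconstruction of $\hat q$, and finally the momentum estimate $\hat p$. For the first stage I would invoke Proposition \ref{pro4} directly, since the optical switch is a special case of the EMS \eqref{pH-General} with $n_L=0$, $n_C=n_M=1$, and the observer for $\hat Q$ in \eqref{obs9} is exactly \eqref{obs-mutual} specialized to this case, with $Y_v = y_v$ (scalar) and $\calb_E=1$. The only hypothesis to check is the PE condition on $Y_v=y_v = {1\over R_C c_1(q+c_0)}$; since $q$ is a bounded trajectory with $q>0$, the signal $y_v$ is bounded away from zero, hence trivially PE. This gives $\lim_{t\to\infty}|\tilde Q(t)| = \mathcal{O}(\varepsilon)$ exponentially, and moreover (by Proposition \ref{pro1}, which Proposition \ref{pro4} relies on) $\lim_{t\to\infty}|\hat y_v(t) - y_v(t)| = \mathcal{O}(\varepsilon)$ exponentially as well.

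For the second stage, the map $\hat q = {1\over R_C c_1 \hat y_v} - c_0$ is a static, smooth function of $\hat y_v$ on the region where $\hat y_v \neq 0$ (guaranteed by the assumed projection), and $q = {1\over R_C c_1 y_v} - c_0$ is the same function of $y_v$. Since $y_v$ is bounded away from zero and $\hat y_v \to y_v$ with error $\mathcal{O}(\varepsilon)$, a mean-value / Lipschitz argument on this map gives $\lim_{t\to\infty}|\tilde q(t)| = \mathcal{O}(\varepsilon)$, exponentially. Here I would note that for $\varepsilon$ small enough $\hat y_v$ eventually enters a neighborhood of $y_v$ bounded away from zero, so the Lipschitz constant of the inversion map is uniformly bounded along the trajectory and the projection is inactive asymptotically.

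For the third stage, write the error dynamics for $\tilde p := \hat p - p$. Subtracting the true momentum equation $\dot p = -a_1 q - a_2 q^3 + {1\over 2 c_1 (q+c_0)^2}Q^2 - {R_M\over m}p$ from the $\hat p$-equation in \eqref{obs9} yields $\dot{\tilde p} = -{R_M\over m}\tilde p + \delta(t)$, where $\delta$ collects the terms $-a_1\tilde q - a_2(\hat q^3 - q^3) + {1\over 2c_1}\big({\hat Q^2\over(\hat q+c_0)^2} - {Q^2\over(q+c_0)^2}\big)$. All the nonlinearities are smooth in their arguments, all signals ($Q,q,\hat Q,\hat q$, and $\hat q+c_0$ bounded away from zero) are bounded, so $|\delta(t)|$ is bounded by a linear combination of $|\tilde Q|$ and $|\tilde q|$, hence $\lim_{t\to\infty}|\delta(t)| = \mathcal{O}(\varepsilon)$ exponentially. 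Since $\dot{\tilde p} = -{R_M\over m}\tilde p + \delta$ is an exponentially stable scalar linear system driven by $\delta$ — and here I assume $R_M>0$ strictly, so that $m>0$ makes ${R_M\over m}>0$ — standard input-to-state / perturbation arguments (\cite[Lemma 9.2]{KHA}) give $\lim_{t\to\infty}|\tilde p(t)| = \mathcal{O}(\varepsilon)$. Combining the three stages and noting all convergences are exponential completes the proof.

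The main obstacle I anticipate is not any single estimate but the bookkeeping of the cascade: one must be careful that the ``$\mathcal{O}(\varepsilon)$, exp.'' property is preserved through the static nonlinear inversion $\hat y_v \mapsto \hat q$ and through the nonlinear terms feeding $\tilde p$, which requires the a priori boundedness of all closed-loop signals and the strict positivity of $q+c_0$ and of $\hat q + c_0$ (the latter via the projection on $\hat y_v$). A secondary subtlety is the edge case $R_M=0$: the statement of Proposition \ref{pro5} does not flag it, but the momentum error equation is then a pure integrator and would need the mechanical dynamics' own dissipation or an extra argument; I would simply assume $R_M>0$ here, consistent with the optical-switch model where mechanical damping is present.
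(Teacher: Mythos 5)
Your proposal is correct and follows essentially the same route as the paper's proof: the charge error equation obtained by substituting the regression ${1\over R_C}y=y_vQ$ (you get it by specializing Proposition \ref{pro4}, the paper re-derives it directly as $\dot{\tilde Q}=-\gamma y_v^2\tilde Q+\mathcal{O}(\varepsilon)$ with $y_v>0$ bounded away from zero), the static inversion for $\tilde q$, and the exponentially stable scalar dynamics $\dot{\tilde p}=-{R_M\over m}\tilde p+\delta$ handled by standard perturbation arguments. Your remark that exponential convergence of $\tilde p$ requires $R_M>0$ (only positive semidefiniteness is assumed in the general model) is a fair point that the paper leaves implicit.
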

\begin{proof}
\rm
Replacing \eqref{regfor2} in the first equation of \eqref{obs9} yields the error equation
$$
\dot{\tilde{Q}} = - \gamma y_v^2 \tilde{Q} + \mathcal{O}(\varepsilon).
$$
Since $q>0$, we have that $y_v>0$, ensuring the exponential convergence of $\tilde{Q}$ to a small neighborhood of the origin. Also, notice that $\lim_{t\to\infty}\hat q(t) - q(t) = \mathcal{O}(\varepsilon)$.

Now, given the fact that
$$
\dot{{p}}  = -a_1{q} - a_2{q}^3 + {1 \over 2c_1({q}+c_0)^2} {Q}^2 - {R_M\over m}{p},
$$
the third equation in \eqref{obs9} may be written as
$$
\dot{\tilde {p}}  =- {R_M\over m}\tilde p + \epsilon_t + \mathcal{O}(\varepsilon).
$$
This completes the proof.
\qed
\end{proof}

\subsection{Levitated ball: Simulation and experimental results}
\lab{subsec52}
%
The dynamics of classical 1-dof MagLev ball is described by the pH system \eqref{pH-General} with $n_C=0$, $n_L=n_M=1$, and the Hamiltonian
$$
H(\lambda,q,p) = mGq + {p^2 \over 2m} + {\lambda^2 \over 2L(q)},
$$
where $m >0$ is the mass of the ball, $G$ is the gravity constant and the inductor is assumed of the form
$
L(q) = {k \over c- q },
$
with $c>0$ and $q\in (-\infty, c)$. The full dynamics is given by
$$
\begmat{\dot{\lambda} \\ \dot{q} \\ \dot{p}}
=
\begmat{ - R  & 0 & 0 \\ 0 & 0 & 1 \\ 0 & -1 & 0} \nabla H
+
\begmat{1 \\ 0 \\ 0 } u.
$$

The output signal is the current in the inductance, that is,
\beeq{
\lab{outmaglev}
y=i_L={\lambda \over k}(c-q),
}
and the virtual output is $y_v={c-q \over k}$. Notice that, similarly to the previous example, from the virtual output we can directly recover the ball position $q$.
\subsubsection{Adaptive observer design}
\lab{subsubsec521}
%
In this subsection, we address the more challenging problem of \emph{adaptive} state observation for this system, namely, with the parameters $m$, $c$ and $k$ known, but $R$ {\em unknown}.

Before presenting the resistance estimator we recall the physical constraints $q \in (-\infty,c)$. As expected, we impose this constraint also to its estimate,\footnote{This can easily be done adding a projection operator to the second equation in \eqref{vir_out_filter}, but is omitted for brevity.} hence
\begali{
y_v =q - c < 0 , \quad
\hat y_v < 0.
\lab{yvneqzer}
}
To simplify the notation in the sequel we introduce a change of coordinate for the position and, denote
$
x:=  \col(\lambda,{1 \over k}(c-q),p),
$
for which we have the following adaptive observer.

\begin{proposition}
\label{prop_MagLev}
\em
Consider the $1$-dof MagLev system. Assume $i_L$ is PE. Then, the adaptive observer
\begequ
\label{obs_MagLev}
\begin{aligned}
\dot{\Hat{R}} & = \gamma_R \phi_R \big(Y_R - \phi_R \Hat{R}\big)
\\
\dot{\hat{x}}_1 & = {-{ \hat{R}} y} + u - \gamma_\lambda  (y - \hat{y}_v \hat x_1)
\\
  \dot{z} & =-{{ \gamma_p  \over km}} z + {1 \over 2k} \hat x_1^2 + { {\gamma_p^2  \over km}} y_v - mG\\
\Hat{x}_2 & = \hat{y_v}\\
    \Hat{x}_3 & = z {- \gamma_p  \hat{y}_v},
\end{aligned}
\endequ
where
\begali{
\nonumber
\dot{v}_1 & = - a v_1 + au \\
\dot{v}_2 &  = - a v_2 + a{y\over \hat{y}_v}  \nonumber \\
\dot \phi_R & = -a \phi_R + {a}y \nonumber \\
Y_R & = -v_1 + a{y\over \hat{y}_v} - av_2,
\lab{stareafil}
}
with $a, \gamma_R , \gamma_\lambda, \gamma_p >0$, guarantees
\begalis{
\lim_{t\to\infty} \big| \tilde{R}(t)  \big|  & = \mathcal{O}(\varepsilon),\;\lim_{t\to\infty} \big| \tilde{x}(t)   \big|= \mathcal{O}(\varepsilon),\; \text{(exp.)},
}
where we defined the parameter estimation and state observation errors $\tilde R:=\hat R-R$ and $\tilde x:= \hat{x} - x$, respectively.
\hfill{$\triangleleft$}
\end{proposition}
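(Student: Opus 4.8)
\rm
The plan is to exploit the triangular (cascade) structure that the change of coordinates $x=\col(\lambda,\tfrac{1}{k}(c-q),p)$ exposes: in these coordinates one has $\dot x_1=-Ry+u$, $\dot x_2=-\tfrac{1}{mk}x_3$ and $\dot x_3=-mG+\tfrac{1}{2k}x_1^2$, while the natural and virtual outputs read $y=i_L=x_1x_2$ and $y_v=x_2$. I would first invoke Proposition \ref{pro1} to conclude that the DREM filter \eqref{vir_out_filter} delivers $|\hat y_v-y_v|=\mathcal{O}(\varepsilon)$ (exp.), so that $\hat x_2=\hat y_v$ already yields $|\tilde x_2|=\mathcal{O}(\varepsilon)$ exponentially. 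The remaining estimates are then treated in three stages: (a) convergence of the resistance estimate $\hat R$; (b) convergence of the flux estimate $\hat x_1$, given $\hat R$ and $\hat y_v$; (c) convergence of the reduced-order momentum estimate $\hat x_3$, given $\hat x_1$ and $\hat y_v$.

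For stage (a) I would exhibit the scalar linear regression on which the gradient (equivalently, scalar DREM) estimator $\dot{\hat R}=\gamma_R\phi_R(Y_R-\phi_R\hat R)$ acts. Since $x_1=\lambda=y/y_v=y/\hat y_v+\mathcal{O}(\varepsilon)$ and $\dot\lambda=-Ry+u$, passing this identity through the stable proper filter $\tfrac{a}{s+a}$ and using $\tfrac{as}{s+a}=a-\tfrac{a^2}{s+a}$ produces, with the auxiliary states $v_1,v_2,\phi_R$ of \eqref{stareafil}, a regression of the form $Y_R=\phi_R R+\epsilon_t+\mathcal{O}(\varepsilon)$, where $\epsilon_t$ collects filter transients. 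As $i_L$ is PE by hypothesis and $\phi_R$ is the output of the exponentially stable filter $\tfrac{a}{s+a}$ excited by $i_L$, the regressor $\phi_R$ is itself PE; hence Proposition \ref{prop_DREM} with $q=1$, combined with standard perturbation arguments, gives $|\tilde R(t)|=\mathcal{O}(\varepsilon)$ exponentially.

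For stage (b), substituting $y=x_1y_v$ and $\hat y_v=y_v+\mathcal{O}(\varepsilon)$ into the $\hat x_1$--equation of \eqref{obs_MagLev} yields the error dynamics $\dot{\tilde x}_1=-\gamma_\lambda|y_v|\,\tilde x_1-\tilde R\,y+\mathcal{O}(\varepsilon)$, where the sign is fixed by the physical constraint $\hat y_v<0$ recorded in \eqref{yvneqzer}. Since $y_v$ is bounded away from zero along bounded trajectories, the homogeneous part is exponentially stable, and as $\tilde R\,y\to\mathcal{O}(\varepsilon)$ exponentially by (a) we obtain $|\tilde x_1|=\mathcal{O}(\varepsilon)$ (exp.). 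For stage (c), the $(z,\hat x_3)$ block is a Luenberger-type reduced-order observer built on $\dot x_2=-\tfrac{1}{mk}x_3$: writing $\hat x_3=z-\gamma_p\hat y_v$ and substituting the $\dot z$--equation, the estimation error obeys $\dot{\tilde x}_3=-\tfrac{\gamma_p}{km}\tilde x_3+\rho$, with $\rho$ collecting the terms proportional to $2x_1\tilde x_1+\tilde x_1^2$ (an $\epsilon_t$--type term plus $\mathcal{O}(\varepsilon)$) and a contribution stemming from $\hat y_v-y_v$; showing $\rho=\epsilon_t+\mathcal{O}(\varepsilon)$ then gives $|\tilde x_3|=\mathcal{O}(\varepsilon)$ exponentially. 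Stacking $\tilde x=\col(\tilde x_1,\tilde x_2,\tilde x_3)$ and invoking the cascade/perturbation tools used elsewhere in the paper (\emph{e.g.} \cite[Lemma 9.2]{KHA}) completes the proof.

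The step I expect to be the main obstacle is stage (c). The honest error equation for $\hat x_3$ contains a term $\gamma_p\,\tfrac{d}{dt}\hat y_v$, and although the filter guarantees $\hat y_v-y_v=\mathcal{O}(\varepsilon)$, the derivative $\tfrac{d}{dt}\hat y_v-\tfrac{d}{dt}y_v$ is \emph{not} $\mathcal{O}(\varepsilon)$ in general, because the filter gain $\gamma$ is $\mathcal{O}(1/\varepsilon)$. One must therefore argue that, after the low-pass action of the $z$--integrator, this term contributes only $\mathcal{O}(\varepsilon)$ to $\tilde x_3$; this is an averaging-type argument exploiting that the offending part of $\tfrac{d}{dt}\hat y_v-\tfrac{d}{dt}y_v$ is high frequency (carried by $S(t/\varepsilon)$) and essentially zero-mean. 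A secondary, more routine, point of care is pinning down the exact form (signs and transients) of the resistance regression of stage (a) and checking that $i_L$ --- equivalently $y_v$ --- remains bounded away from zero, so that the divisions by $\hat y_v$ in \eqref{stareafil} and the excitation in the $\hat x_1$ error equation are well posed.
\qed
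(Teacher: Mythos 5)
Your proposal follows essentially the same route as the paper's proof: the filtered regression $Y_R=R\,\phi_R$ obtained by differentiating $\lambda=y/y_v$ and applying the filter $\tfrac{a}{s+a}$ with $\phi_R$ PE, the exponentially stable $\tilde x_1$ error dynamics perturbed by $\tilde R\,y$, and the KKL error equation $\dot{\tilde x}_3=-\tfrac{\gamma_p}{km}\tilde x_3+\tfrac{1}{2k}(\hat x_1+x_1)\tilde x_1$, closed by cascade/perturbation arguments (the paper in fact only sketches the nominal case $\hat y_v=y_v$ and defers the $\mathcal{O}(\varepsilon)$ perturbation analysis to the conference version). The stage-(c) obstacle you flag dissolves: since $\hat x_3=z-\gamma_p\hat y_v$ is a static output map, define the error via $z-\gamma_p y_v-x_3$ so that the filter mismatch enters only as the static term $\gamma_p(y_v-\hat y_v)=\mathcal{O}(\varepsilon)$, and $\tfrac{d}{dt}\hat y_v$ never needs to be bounded.
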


\begin{proof}\rm \footnote{We give the sketch of the proof in the nominal case $\hat{y}_v= {y}_v$. The full proof can be obtained via perturbation analysis, which can be found in \cite{YIetalcdc}.}
 From \eqref{outmaglev} and \eqref{yv} we have that $\lambda = {y \over y_v}$, which is well defined in view of \eqref{yvneqzer}. Computing the derivative with respect to time yields 
 $$
 {d \over dt}\left( {y \over y_v} \right) = - R{y}+u .
 $$ 
 Applying to the equation above the LTI filter ${a \over \bp+ a}$ yields
\begequ
\lab{linregr}
 {a \bp \over \bp+ a}\bigg[ {y \over y_v} \bigg] - { a \over \bp+ a} \big[u\big] = - R { {a} \over \bp+ a} \big[y\big]  + \et,
\endequ
As shown in \cite{ARAetaltac}, without loss of generality, the term $\et$ is neglected in the sequel.

Notice now that \eqref{stareafil} is a state realization of the filters
\begequ
\lab{yrphir}
\begin{aligned}
    Y_R  = {a \bp \over \bp+ a}\bigg[ {y \over y_v} \bigg] - { a \over \bp+ a} \big[u\big] , \quad
    \phi_R  =  {-  {{a}  \over \bp+ a} \big[y\big].}
\end{aligned}
\endequ
From \eqref{linregr} and \eqref{yrphir} we get the linear regression
$
Y_R=R \phi_R,
$
which upon replacement in the first equation of \eqref{obs_MagLev}, yields
$$
\dot{\Tilde{R}} = - \gamma_R \phi_R^2 \Tilde{R},
$$
where we defined the resistance estimation error $\Tilde{R}:= \Hat{R} - R$. Exponential convergence {to zero} follows invoking the PE assumption of $y$, which ensures $\phi_R$ is also PE \cite{SASBOD}.

We proceed now to analyze the behavior of the state observer. From the second equation of \eqref{obs_MagLev}, we get
\begalis{
    \dot{\tilde {x}}_1  = - \gamma_\lambda  (y - y_v \hat x_1)+ { \tilde{R}} y
     = - \gamma_\lambda  y^2_v \tilde x_1+ { \tilde{R}} y.
}
From \eqref{yvneqzer} we see that the unperturbed dynamics is exponentially stable and, moreover, $y$ is bounded. Using these two properties and the fact that $\lim_{t\to\infty}\tilde R(t) = 0$ (exp.) proves that $\lim_{t\to\infty}\tilde x_1(t) = 0$ (exp.).

The proof is completed noting, after some lengthy but straightforward calculations, that for $x_3$ we get the error equation
$$
\dot {\tilde x}_3  =- {{\gamma_p   \over km}}\tilde x_3 + {1 \over 2k}(\hat x_1 + x_1)\tilde x_1,
$$
and recalling that the term in parenthesis in the right hand side is bounded.
\qed
\end{proof}
\subsubsection{Discussion}
\lab{subsubsec522}
%
The following remarks are in order.

\noindent {\bf R9}  The electromagnetic valve actuator is another magnetic-field EMS extensively used in industry \cite{PETSTE} for which the observer of Proposition \ref{prop_MagLev} can be directly applied.

\noindent {\bf R10} We make the important observation that it is possible to show that the MagLev system, with the output $y$ given in \eqref{outmaglev}, does not satisfy the observability rank condition [Section 1.2.1]\cite{BESbook}, therefore it is not uniformly differentially observable.

\noindent {\bf R11} The assumption that $i_L$ is PE is not restrictive at all. Actually it is possible to show that this condition can be transferred to the control $u$.\footnote{The details of this proof are omitted for brevity.} Now roughly speaking, since $u$ defined in \eqref{eq1} contains an additive {(high-frequency)} term that is PE, the condition that $u$ is PE will almost always be satisfied.

\noindent {\bf R12} The observer presented above is a Kasantzis-Kravaris-Luenberger (KKL) observer, see \cite{ANDPRA}. An alternative to this is a standard Luenberger observer
\begalis{
    \dot z_1 & ={- {1 \over km}} z_2 + c_1({x}_2 - z_1) \\
    \dot z_2 & = {1 \over 2k} \Hat{x}_1^2 - mG + c_2({x}_2 - z_2)\\
    \hat x_3 &= z_2,
}
with $c_1>0$ and $c_2>0$. However, the order of such a design is higher than that of the KKL observer. Moreover, as shown in Subsection \ref{subsubsec523} it was observed in simulations that the KKL observer outperforms the Luenberger one and is easier to tune.
\subsubsection{Simulations}
\label{subsubsec523}
In this subsection, the performance of the observer \eqref{obs_MagLev} for the MagLev system, together with the estimator for virtual output \eqref{vir_out_filter}, are validated via computer simulations conducted with Matlab/Simulink. The parameters used in the simulation are given in Table \ref{tab:2}. The new design is compared, via simulations, with the ones in \cite{COMetalacc,YIetalscl}. In both simulations and experiments, the desired equilibrium is $(\sqrt{2kmg}, q_{\star}, 0)$, with $q_{\star}$ taken as a pulse train, and with the initial states $(\sqrt{2kmg}, 0, 0)$. For a fair comparison with other designs, parameters are tuned to guarantee that the performance at the transient stage are similar. Simulations are run with the {\em full state-feedback} version of the interconnection and damping assignment passivity-based control (IDA-PBC) \cite{ORTetalcsm}, that is:
\begali{
u_C & = - \hat{R} i - K_p \bigg( {1\over \alpha} (\lambda -\lambda_{\star})  + (q - q_{\star}) \bigg)  - \bigg( {\alpha \over m} + K_p \bigg) p,
\nonumber
}
with $K_p = 200.7$ and $\alpha = 33.4$. To make simulations more realistic, we add measurement noise in the current $i$, which is generated with the ``band-limited white noise'' block with the noise power $10^{-10}$ and sample time $10^{-3}$ s. The parameters in the proposed observer are selected as ${\mathfrak{b} }=1,\varepsilon=1/300, a=500, \gamma=3.5\times 10^8,  \gamma_R=500, \gamma_\lambda=8000, \gamma_p= 30, {\hat{R}(0)=2}$, {and the other initial states were selected as zero}. The parameters of the design in \cite{YIetalscl} are selected as $n=10,\varepsilon=1/300, \alpha=0.01,\gamma=50$, and the parameters in \cite{COMetalacc} are selected as $n=10,\varepsilon=1/300$. {The excitation periodic function is selected as sinusoidal function.}
\begin{table}[h]
\centering
\caption{Parameters of MagLev systems: Simulation (First Column) and Experiments (Second Column)}
\label{tab:2}
\renewcommand\arraystretch{1.3}
\begin{tabular}{l|r|r}
\hline\hline
 Ball mass [kg]&  0.0844&  0.0844  \\
 Gravitational acceleration [$\text{m/s}^2$] &9.81& 9.81\\
 Resistance [$\Omega$] &2.52 & 10.615 \\
 Position ($c$) [m] & 0.005& 0.0079\\
 Inductance constant ($k$) [$\mu$H$\cdot$m]& 6404.2 & 49950\\
 \hline\hline
\end{tabular}
\end{table}

Simulation results in Matlab/Simulink are shown in Figs. \ref{fig:virtual}-\ref{fig:state}, where $\hat{y}_v,\hat{\lambda},\hat{q},\hat{p}$ and $\hat{R}$ denote the results from the proposed design, $\hat{y}_v^1,\hat{\lambda}^1$ and $\hat{q}^1$ denote the one from the design in \cite{YIetalscl}, and $\hat{y}_v^2$ is the virtual output estimation from \cite{COMetalacc}. As expected, the new design is less sensitive to measurement noise due to its structure, and also, the steady-state observation are of $\mathcal{O}(\varepsilon)$ accuracy. Besides,  the KKL observer outperforms the Luenberger one, whose estimate is written as $\hat{p}^\texttt{L}$. We then test a {\em sensorless} version of the IDA-PBC law with the same parameters as those above. We observe in Fig. \ref{fig:state-outputfeedback} that the position has a significant regulation error in the first second, which is due to the initial inaccurate estimation of $R$. However, the remaining transients are satisfactory.

\begin{figure}[]
    \centering
\includegraphics[width=7cm]{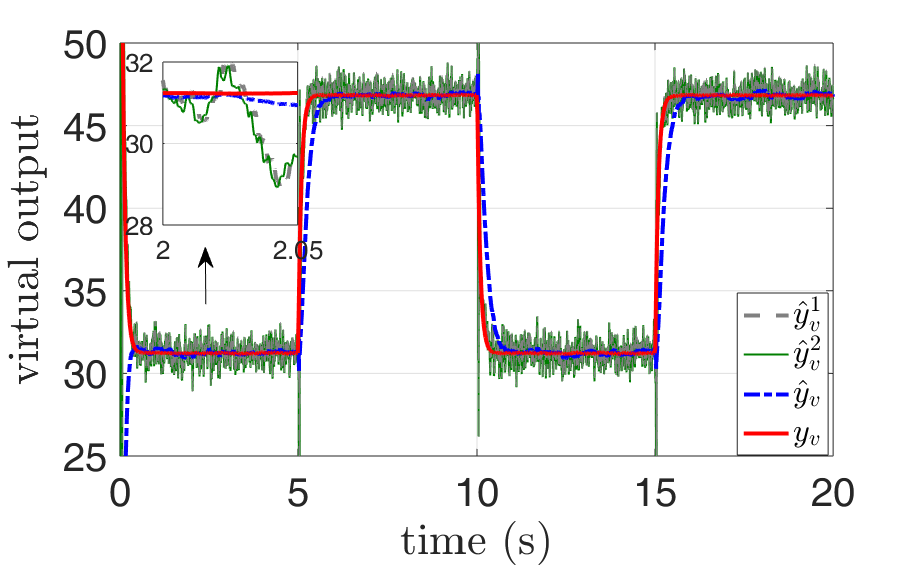}
\includegraphics[width=7cm]{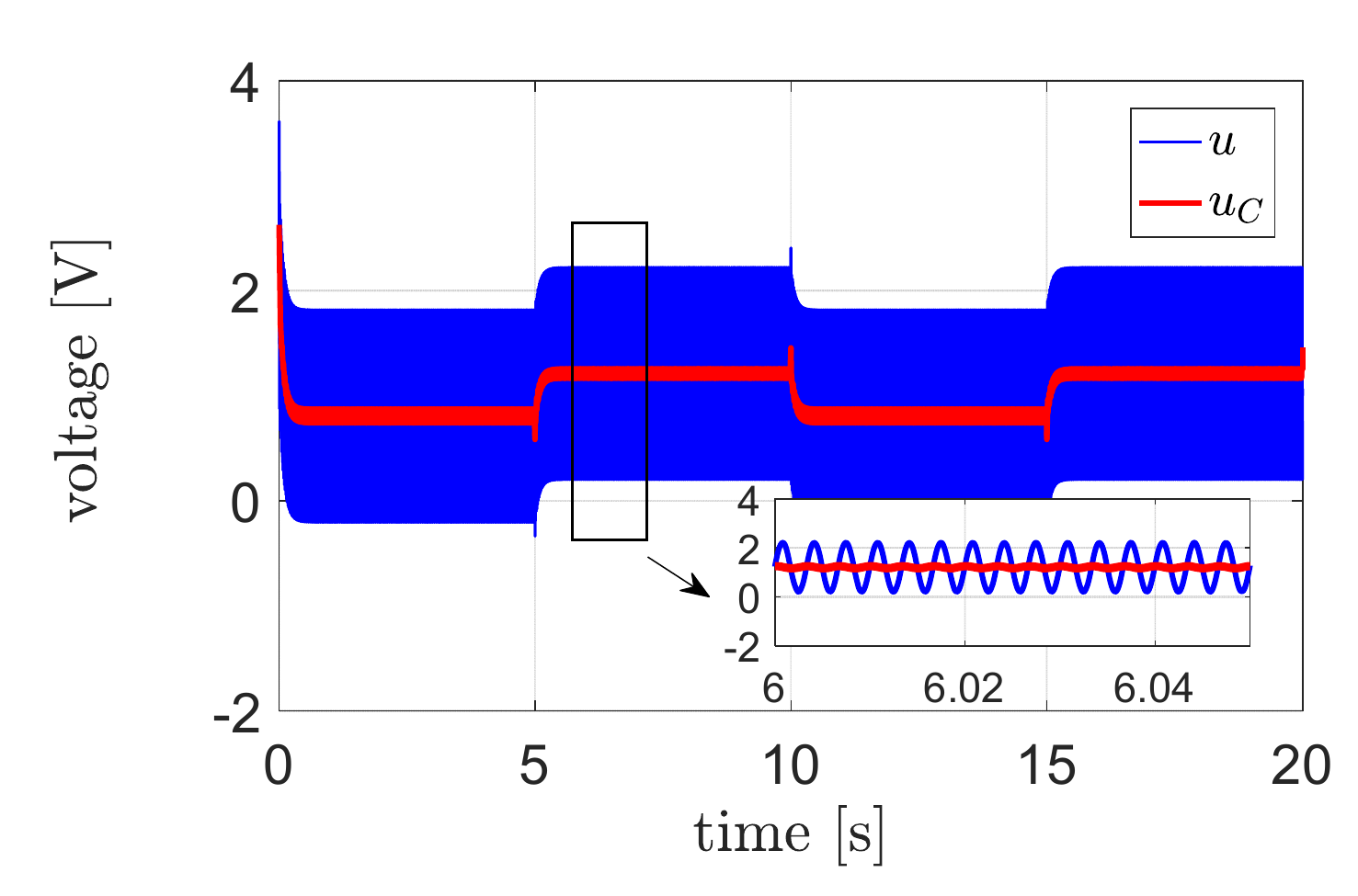}
    \caption{Virtual output estimation and input with signal injections (simulation)}
    \label{fig:virtual}
\end{figure}

\begin{figure}[]
    \centering
\includegraphics[width=7cm]{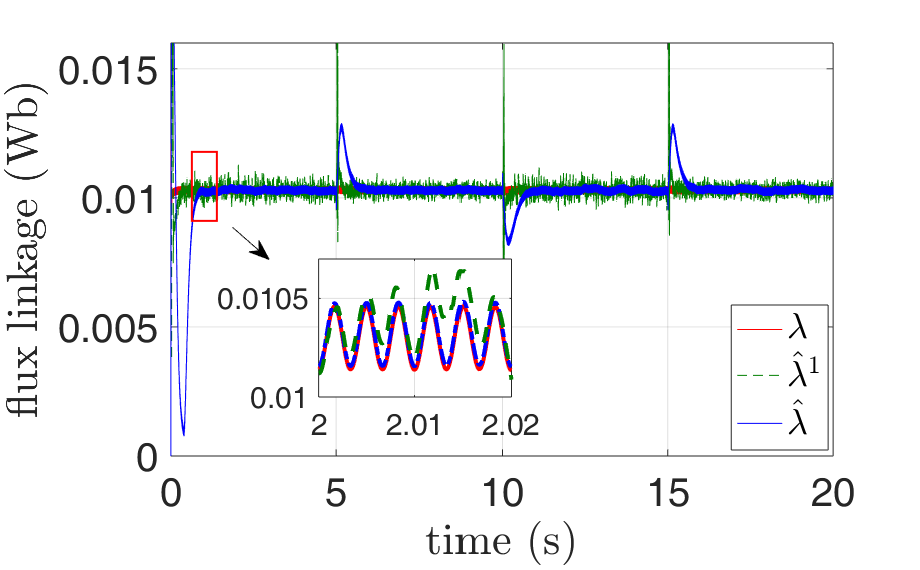}
\includegraphics[width=7cm]{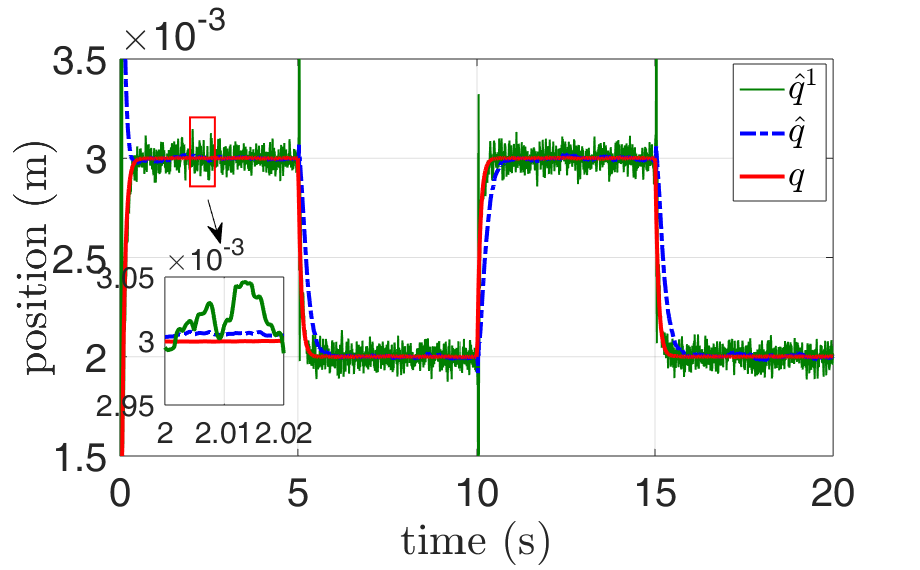}
\includegraphics[width=7cm]{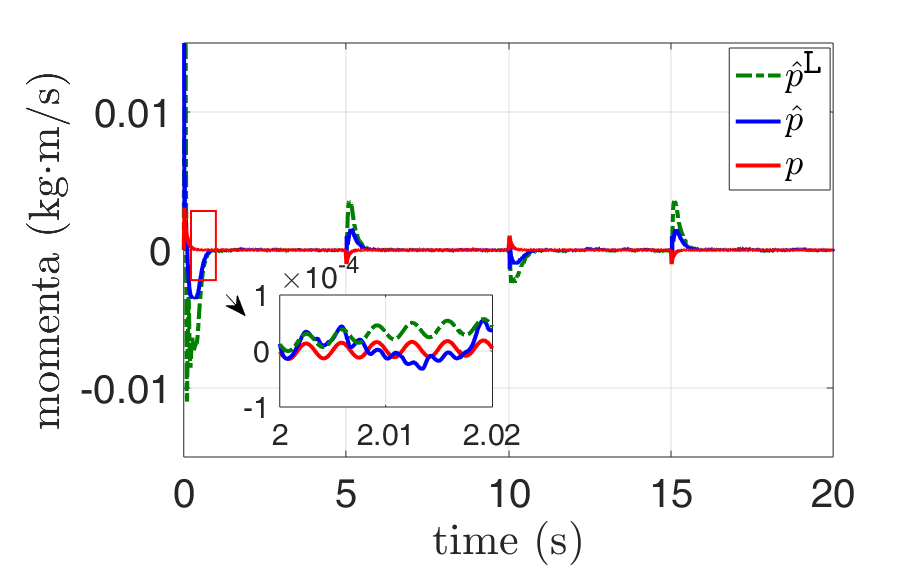}
\includegraphics[width=7cm]{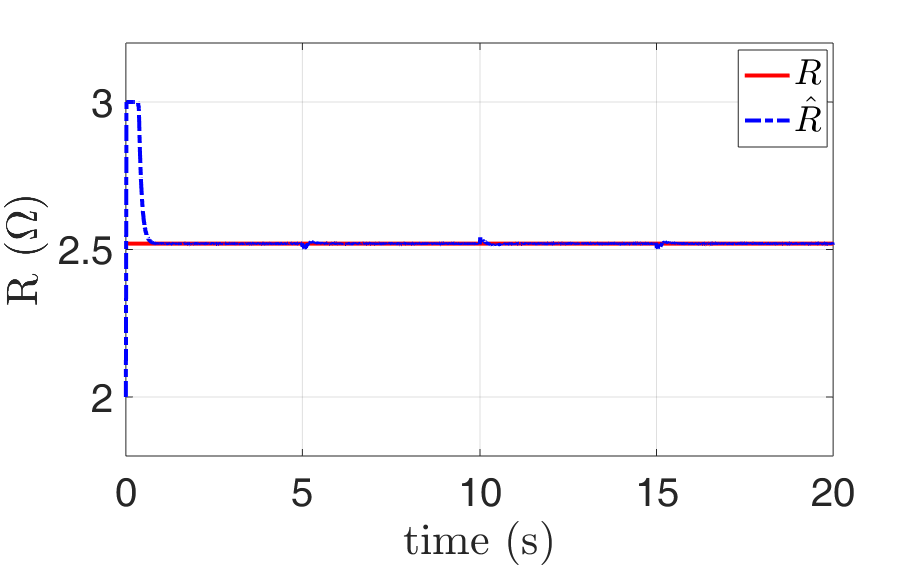}
    \caption{State and parameter estimations (simulation)}
    \label{fig:state}
\end{figure}

\begin{figure}[]
    \centering
\includegraphics[width=6.5cm]{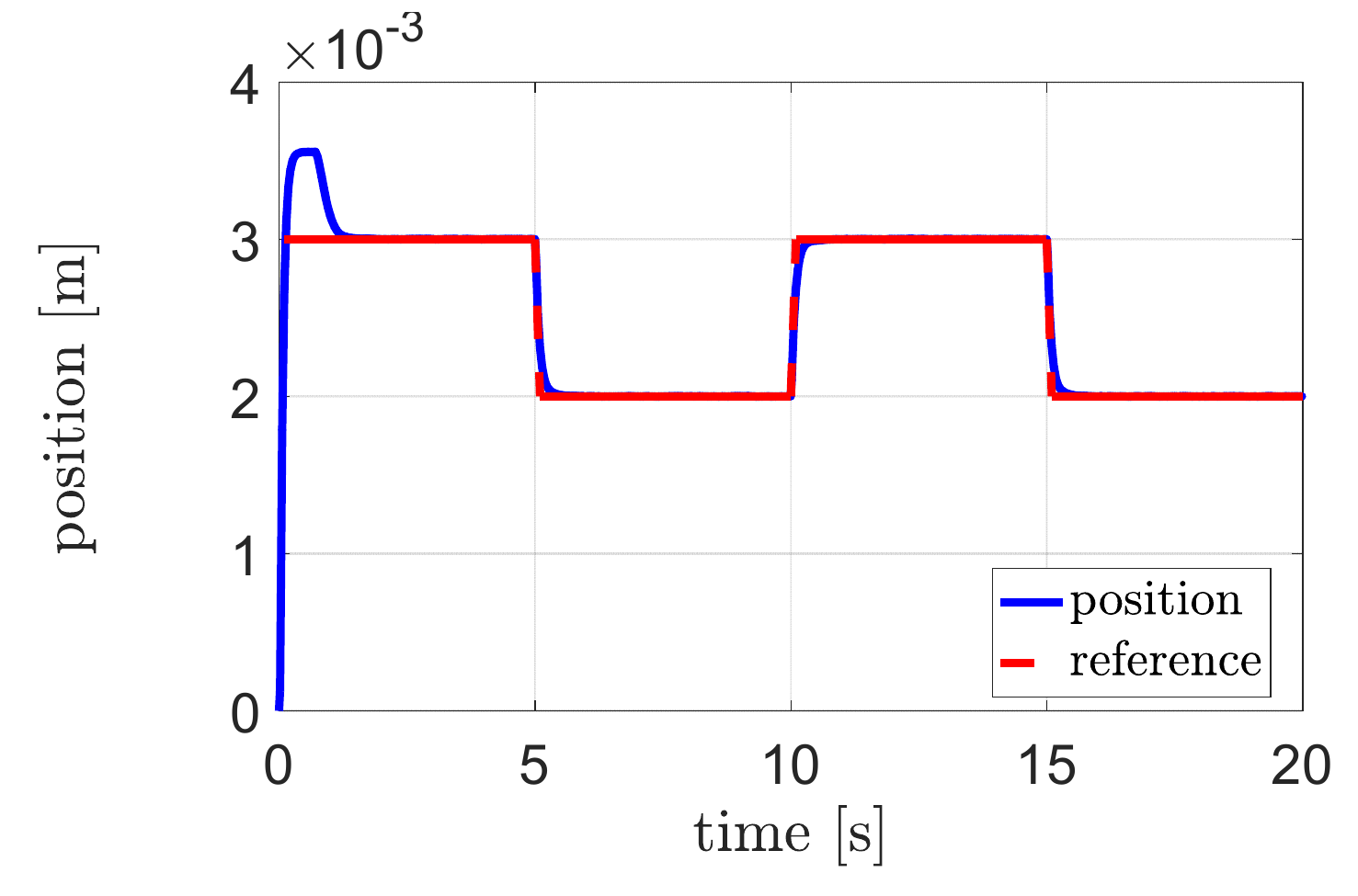}
\includegraphics[width=6.5cm]{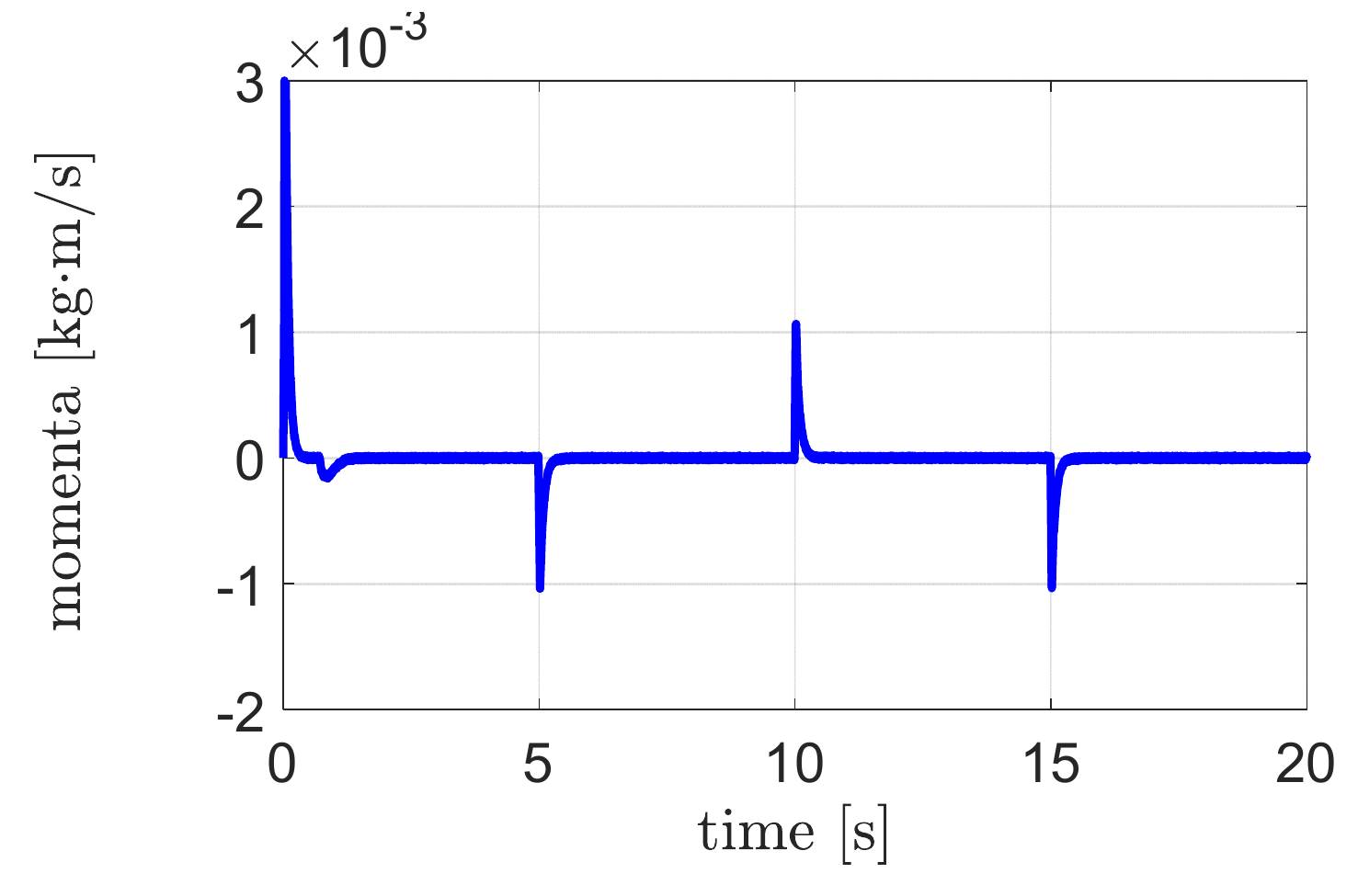}
    \caption{States with sensorless control (simulation)}
    \label{fig:state-outputfeedback}
\end{figure}
\subsubsection{Experiments}
\label{subsubsec524}
Some experiments have been conducted on the experimental set-up shown in Fig. \ref{fig:set-up}, which is located at the D\'epartment d'Automatique, CentraleSup\'elec. The proposed adaptive observer was tested in closed-loop with the well-tuned backstepping-plus-integral controller
$$
\begin{aligned}
u_0 & = R(c-q)|\Upsilon|^{1\over 2}\text{sign}(\Upsilon)  -  K_i u_\texttt{I}\\
\dot{u}_\texttt{I} &= q - q_{\star}, \quad u_\texttt{I}(0) = 0,
\end{aligned}
$$
with $\Upsilon(q,p) = {2\over k} \big( mg - \gamma_1(p-p_{\star} ) - \gamma_2 m(q- q_{\star})\big)$, $K_i=1,\;\gamma_1=340$ and $\gamma_2=3$. The parameters in the observer are taken as $A_0=1.5, \varepsilon=1/33, d=10\varepsilon,a=10$ and $\gamma=4\times 10^5, \gamma_R=50, \gamma_\lambda=8000, \gamma_p=20$.

The responses are shown in Figs. \ref{fig:state_exp}-\ref{fig:output-exp}, where we also give the position estimate $\hat{q}^*$ from the design in \cite{YIetalscl}. Unfortunately, the device is only equipped with sensors for position and current. Hence, we can only compare the position estimate with its measured values, as well as the flux linkage estimate with its desired equilibrium. Again, we verify the accuracy and the robustness of the new observer in the presence of measurement noise. Fig. \ref{fig:freq} gives the position estimates with different probing frequencies. It illustrates that a higher frequency yields a higher accuracy, but at the price of a more jittery response.

\begin{figure}[]
    \centering
\includegraphics[width=8cm]{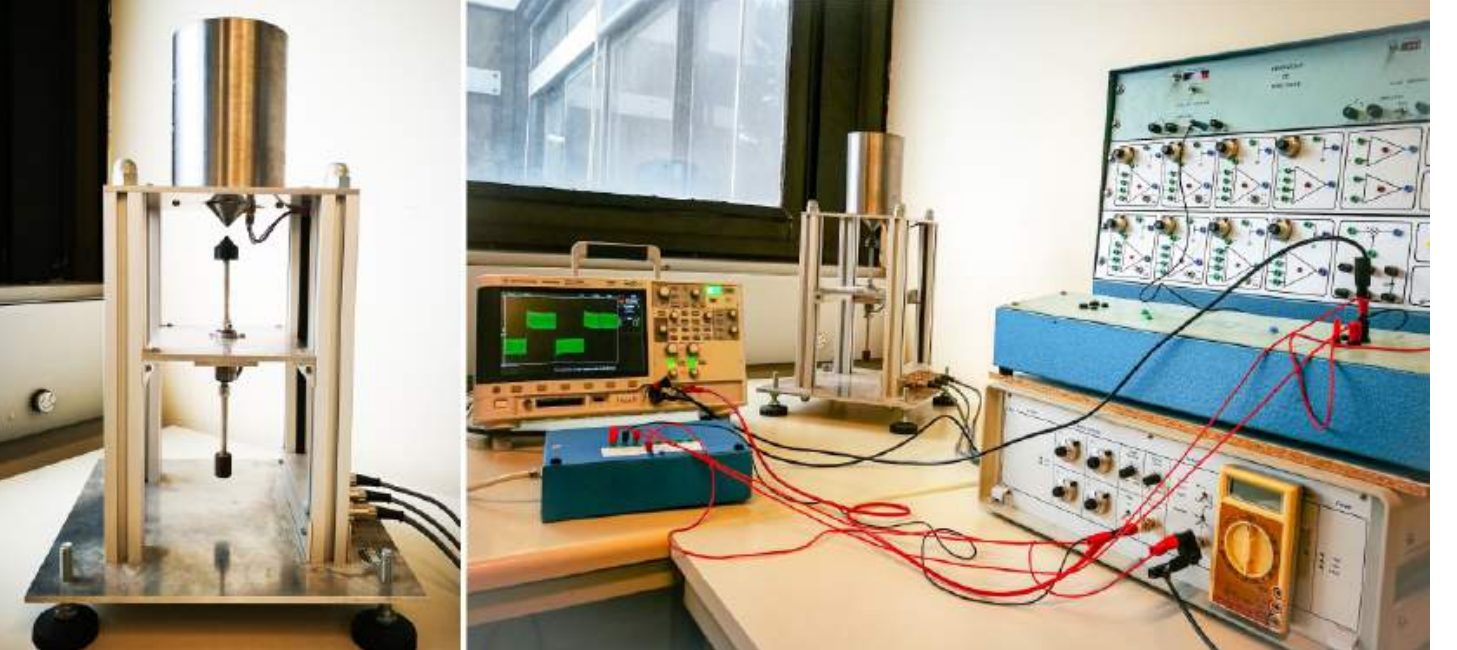}
    \caption{Experimental set-up}
    \label{fig:set-up}
\end{figure}

\begin{figure}[]
    \centering
\includegraphics[width=7cm]{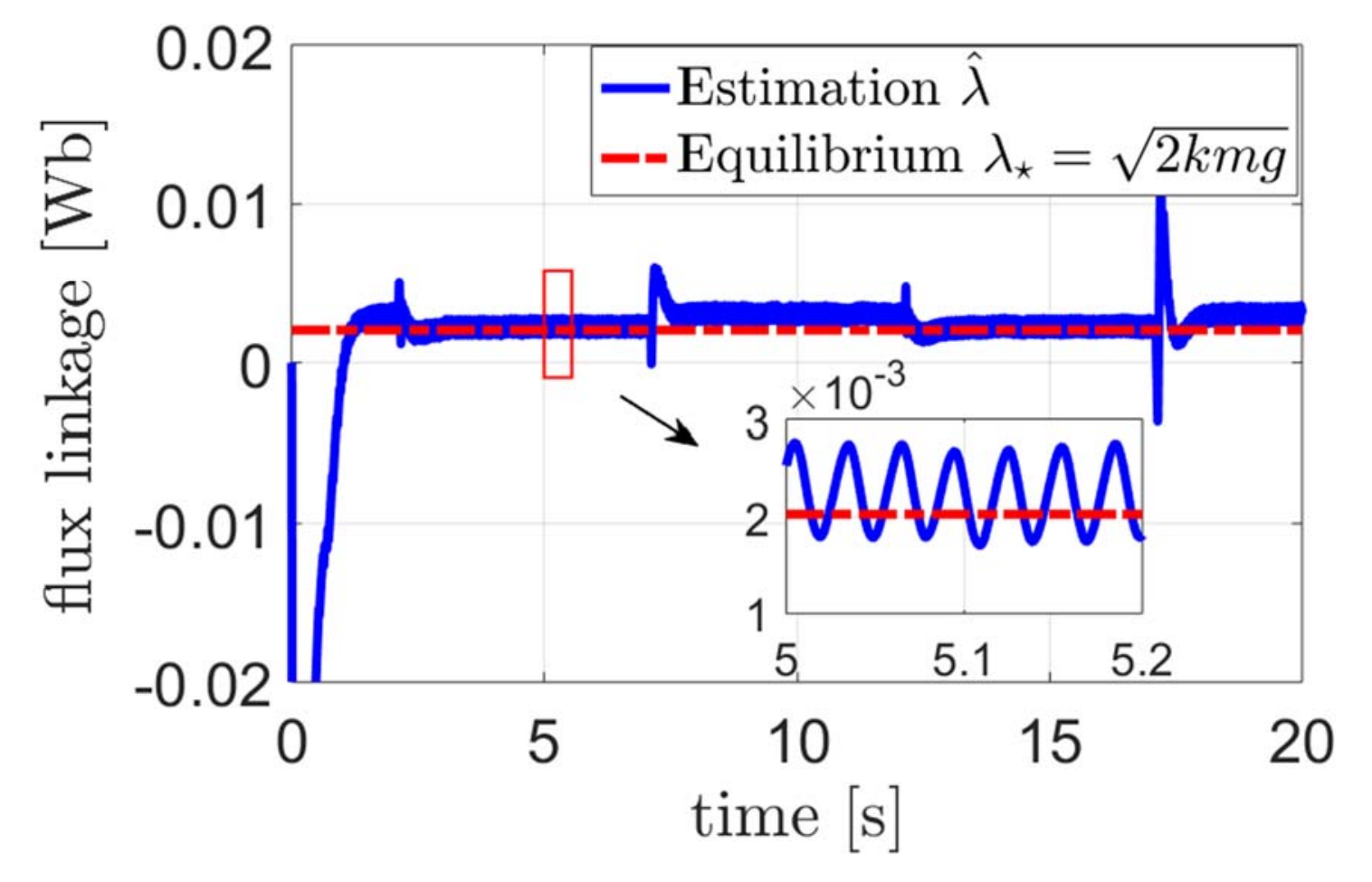}
\includegraphics[width=7cm]{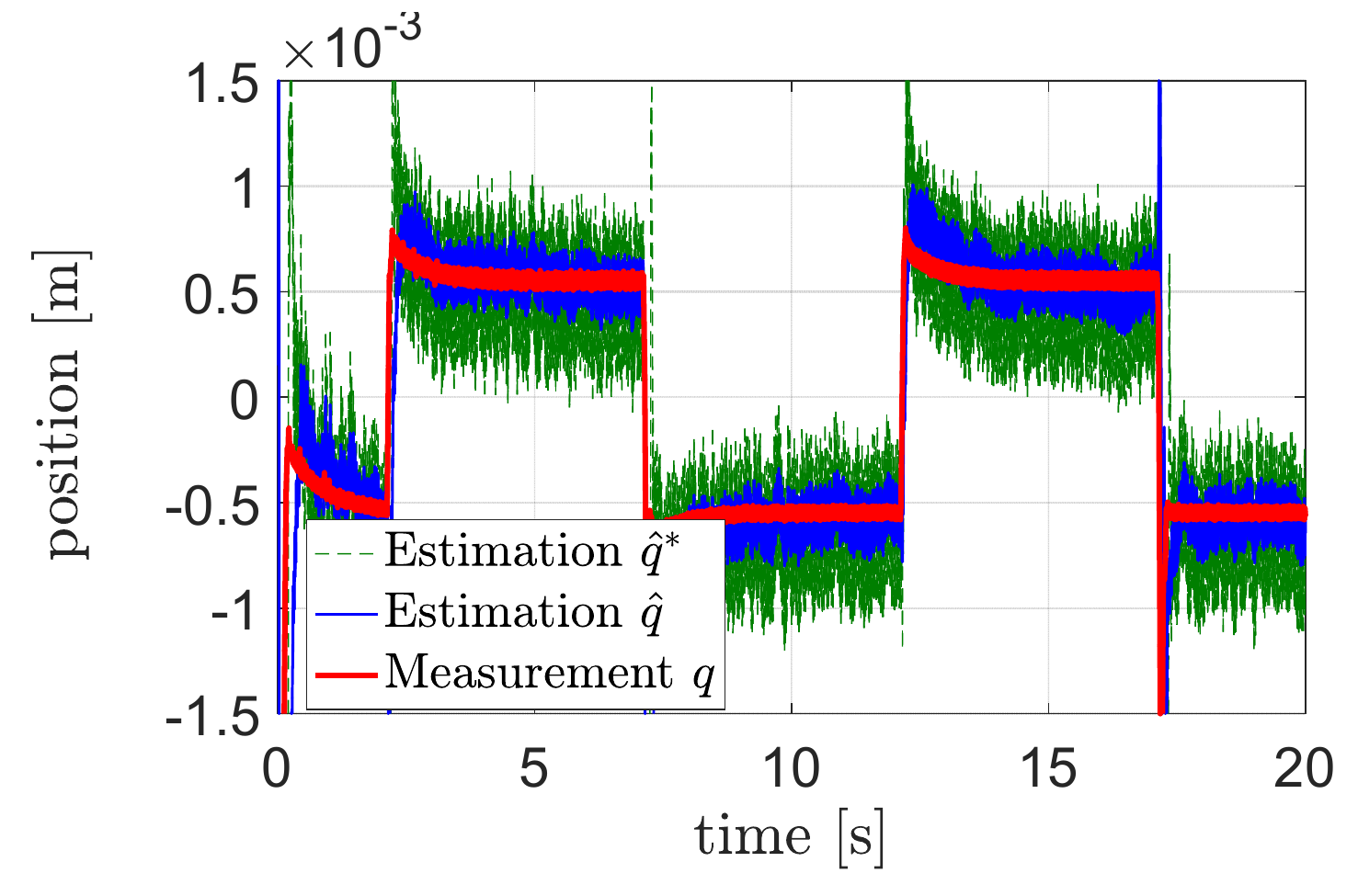}
\includegraphics[width=7cm]{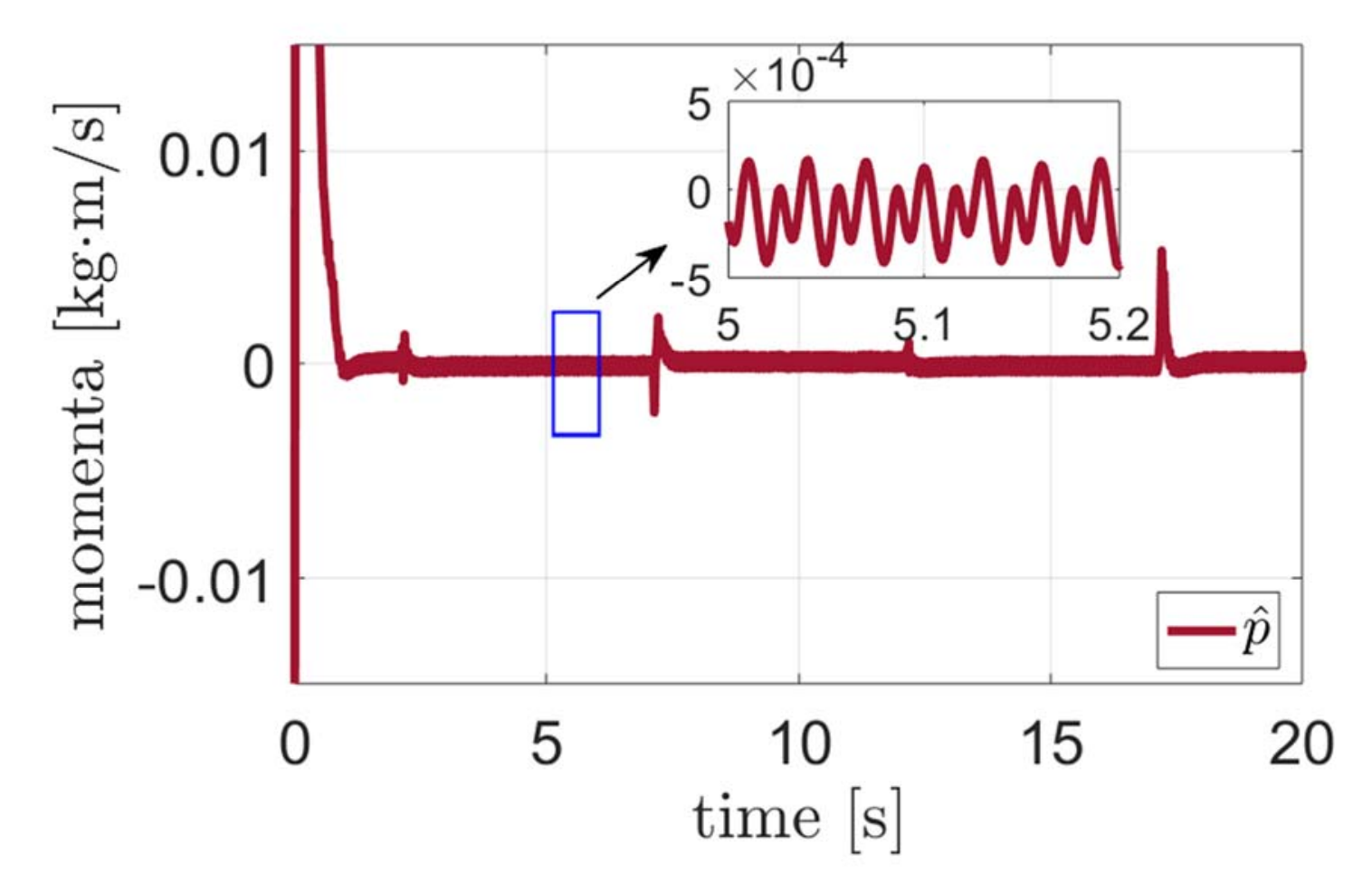}
\includegraphics[width=7cm]{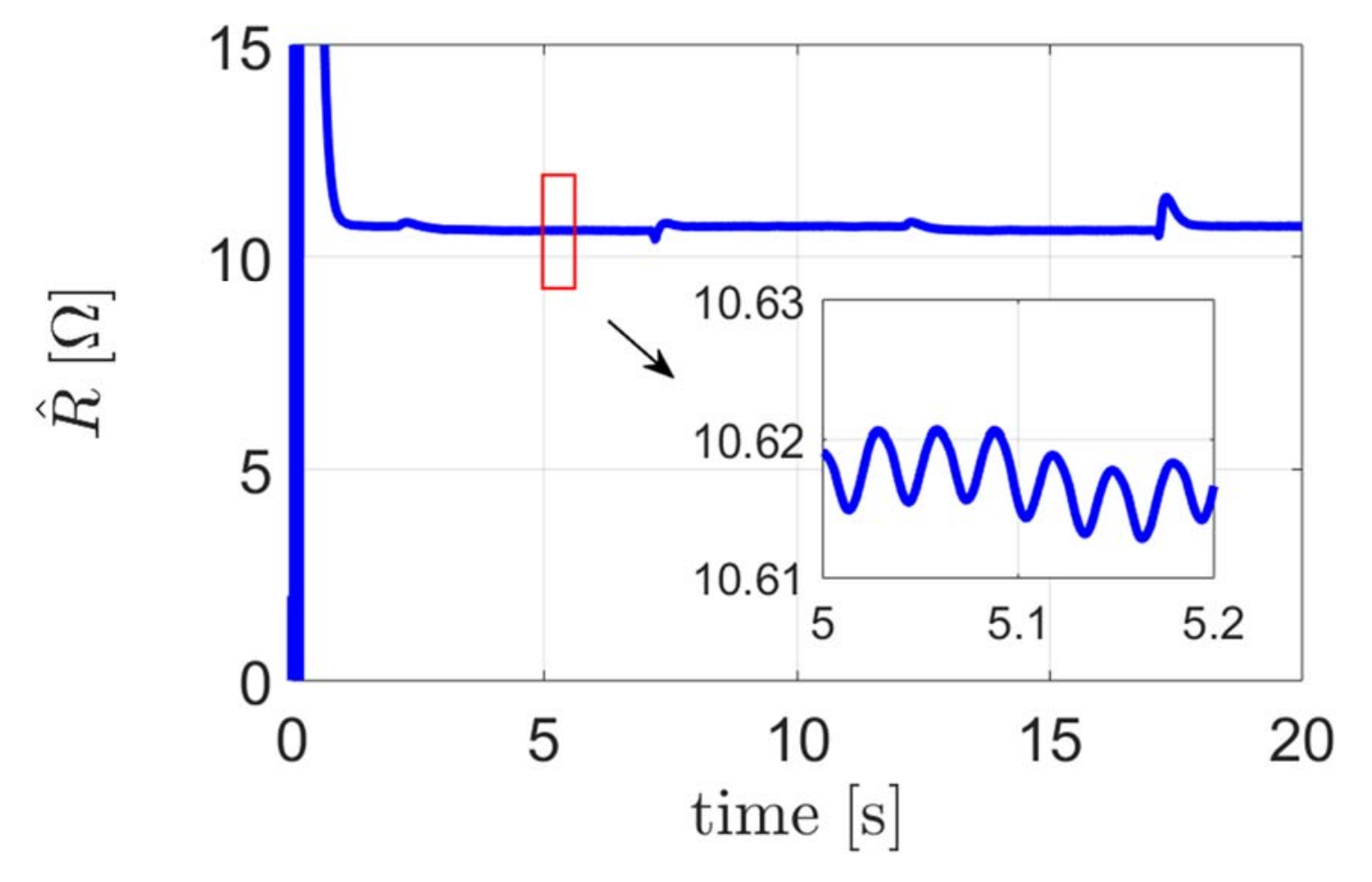}
    \caption{{State and parameter estimations (experiment)}}
    \label{fig:state_exp}
\end{figure}

\begin{figure}[]
    \centering
\includegraphics[width=7cm]{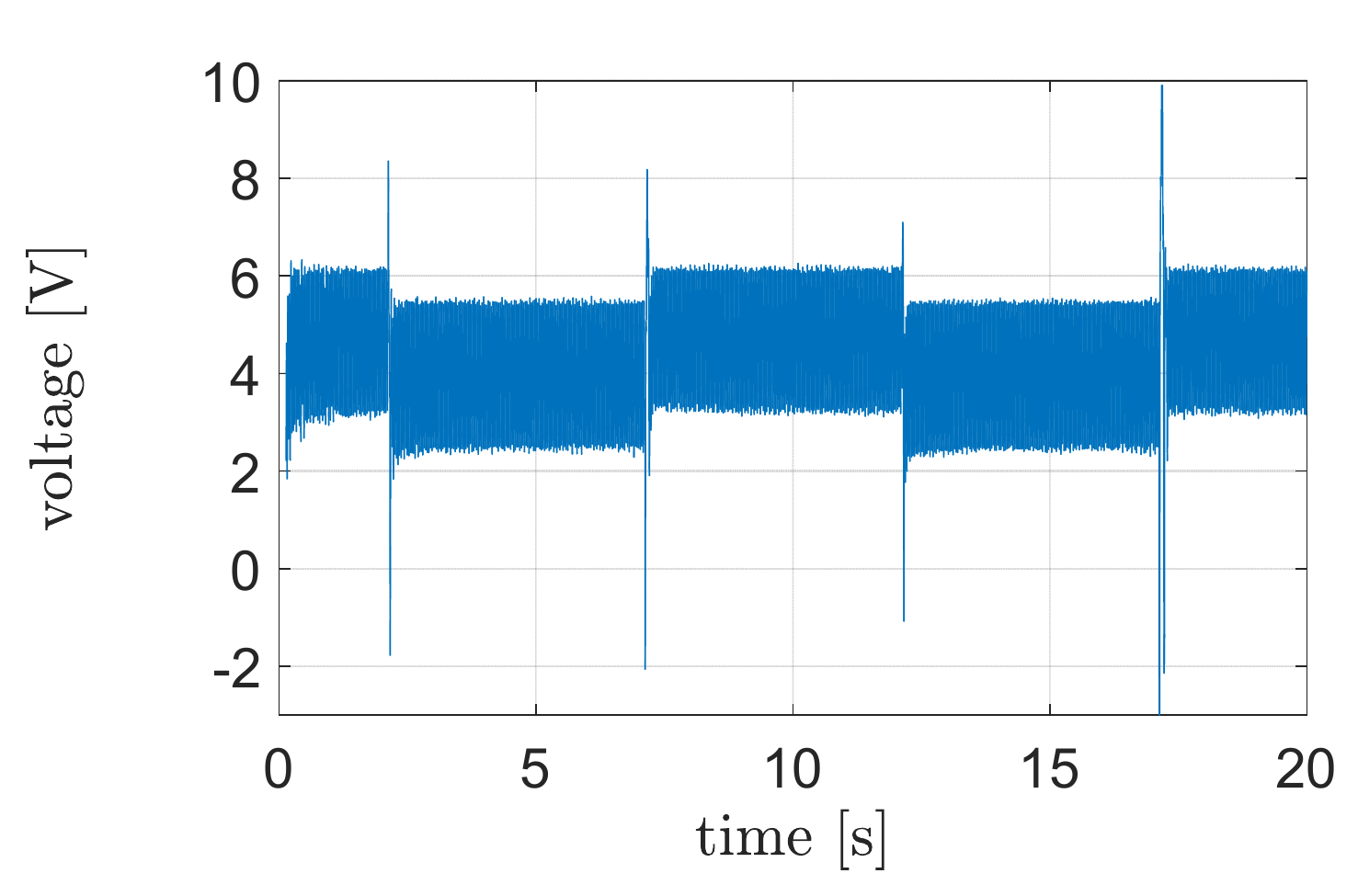}
\includegraphics[width=7cm]{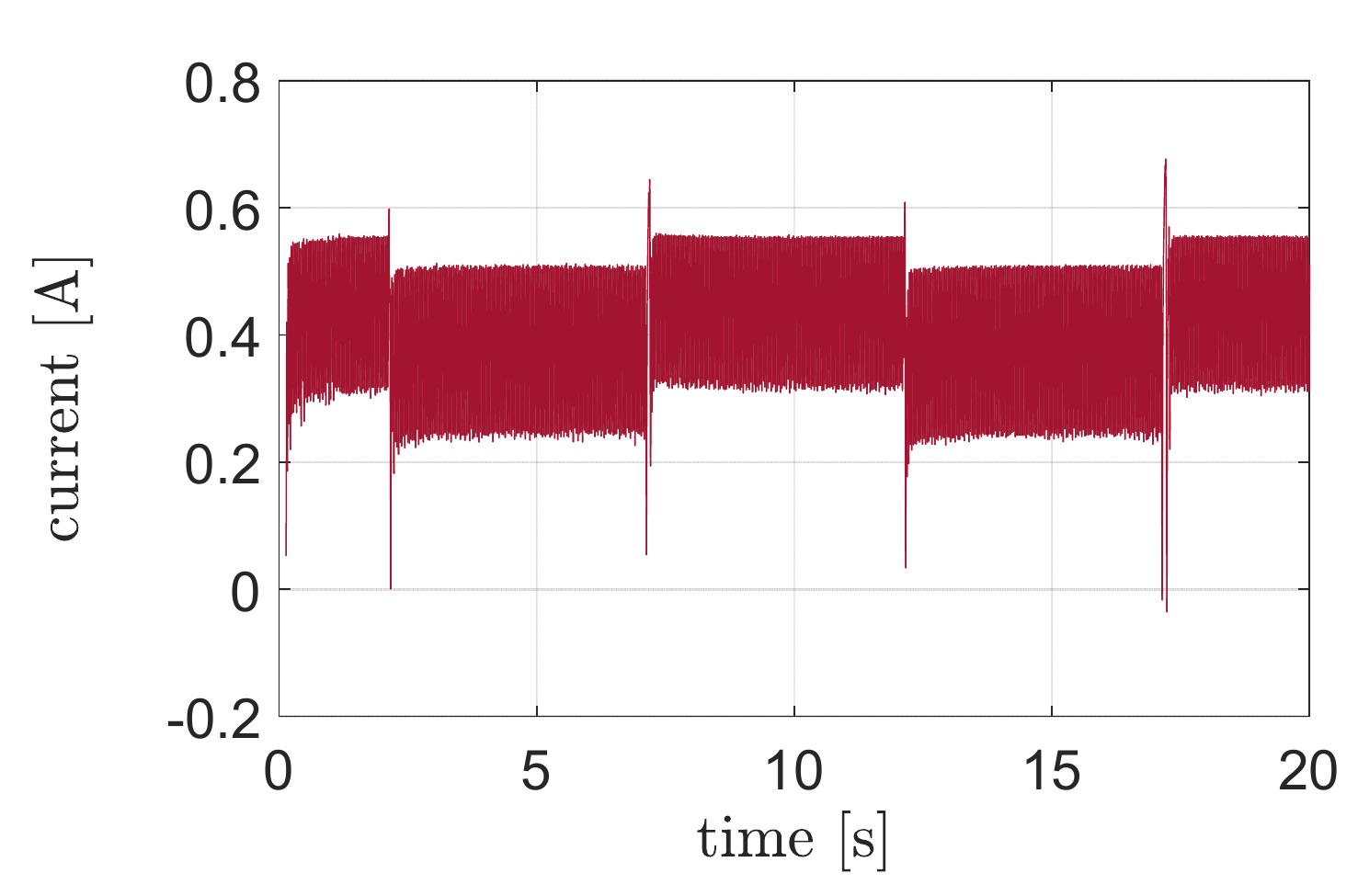}
    \caption{Input and output (experiment)}
    \label{fig:output-exp}
\end{figure}

\begin{figure}[]
    \centering
\includegraphics[width=7cm]{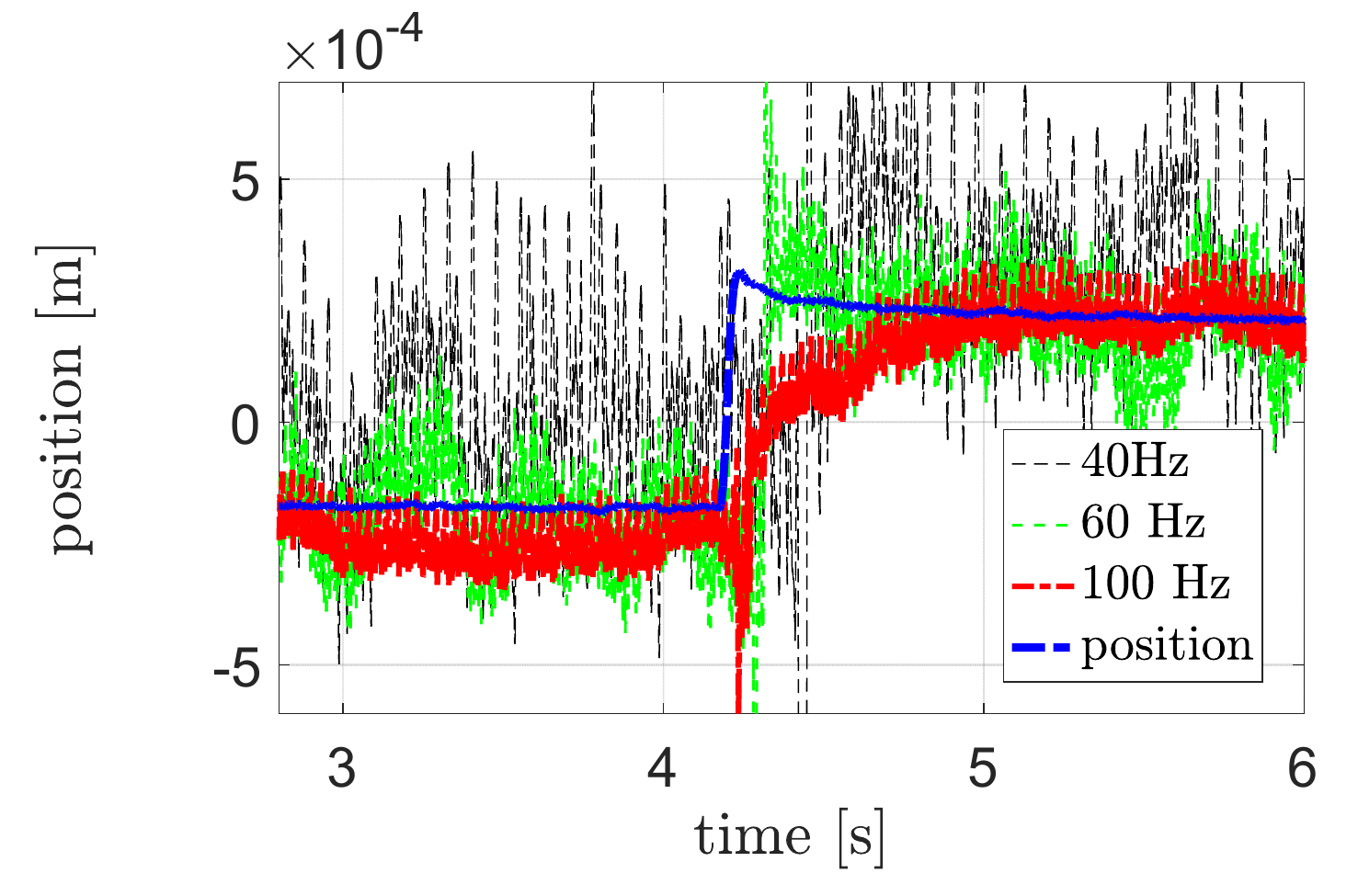}
    \caption{Position estimation for different excitation frequencies (experiment)}
    \label{fig:freq}
\end{figure}

%
\section{Conclusions}
\lab{sec6}
%

In this work we present a new filter for the estimation of the virtual outputs generated with the signal injection technique of \cite{COMetalacc}. The proposed filter has a \emph{closed-loop} structure, providing some robustness to measurement noise and parameter uncertainty. We apply the new design to the sensorless observation problem of a class of EMS. The method is illustrated with two examples---the optical switch and the 1-dof MagLev system, with experiments being conducted for the latter.

Some problems that are being currently investigated are the following.
\begin{itemize}
  \item In \cite{BOBetal,PYRetal} observers for EMS, that do not rely on signal injection, have been proposed. It would be interesting to compare the performance of both approaches and, eventually, combine them in an effective way. In particular, using the signal injection when the signal excitation level is low. A mixed scheme like this has recently been proposed for motors in \cite{CHOetal,ORTYI}.
  \item The 1-dof MagLev system is a benchmark of electromechanical systems. We are currently investigating the application of the new observer to other electromechanical systems---in particular, electrical motors.
  \item Although we analyze the effects of probing frequencies from the theoretical viewpoint, as pointed out in Remark {\bf R2} there are many practical considerations that must be taken into account before claiming it to be an operational technique.
  %
\end{itemize}

%
\section*{Acknowledgement}                            
This paper is supported by the NSF of China (61473183, U1509211, 61627810), China Scholarship Council, National Key R\&D Program of China (SQ2017YFGH001005), and by the Government of the Russian Federation (074U01), the Ministry of Education and Science of Russian Federation  (GOSZADANIE 2.8878.2017/8.9, grant 08-08).

\end{document}